\newcommand{\R}{\mathbb{R}}
\newcommand{\Z}{\mathbb{Z}}
\newcommand{\bfa}{{\bf a}}
\newcommand{\bfc}{{\bf c}}
\newcommand{\bfe}{{\bf e}}
\newcommand{\bff}{{\bf f}}
\newcommand{\bfg}{{\bf g}}
\newcommand{\bfh}{{\bf h}}
\newcommand{\bfk}{{\bf k}}
\newcommand{\bfn}{{\bf n}}
\newcommand{\bfp}{{\bf p}}
\newcommand{\bfr}{{\bf r}}
\newcommand{\bft}{{\bf t}}
\newcommand{\bfu}{{\bf u}}
\newcommand{\bfv}{{\bf v}}
\newcommand{\bfx}{{\bf x}}
\newcommand{\bfy}{{\bf y}}
\newcommand{\bfz}{{\bf z}}
\newcommand{\bfA}{{\bf A}}
\newcommand{\bfI}{{\bf I}}
\newcommand{\bfQ}{{\bf Q}}
\newcommand{\bfR}{{\bf R}}
\newcommand{\bfW}{{\bf W}}
\newcommand{\beq}{\begin{equation}}
\newcommand{\eeq}{\end{equation}}
\newcommand{\beqs}{\begin{eqnarray}}
\newcommand{\eeqs}{\end{eqnarray}}
\newcommand{\calC}{{\cal C}}
\newcommand{\calD}{{\cal D}}
\newcommand{\calU}{{\cal U}}
\newtheorem{theorem}{Theorem}[section]
\newtheorem{lemma}{Lemma}[section]
\DeclareMathOperator{\sign}{sign}
\begin{document}

\begin{center}
\Huge

{Phase transformations and compatibility in helical structures} \\

\vspace{5mm}
\large

\vspace{2mm}
Fan Feng, Paul Plucinsky and Richard D. James  \\
Department of Aerospace Engineering and Mechanics \\
University of Minnesota  \\
{\small fengx258@umn.edu, pplucins@umn.edu,  james@umn.edu}\\
\end{center}

\vspace{0.2in}
\normalsize

\noindent {\bf Abstract.}   
We systematically study phase transformations from one helical structure to another.
Motivated in part by recent work that relates the presence of compatible interfaces with 
properties such as the hysteresis and reversibility of a phase transformation \cite{zarnetta_10,song_enhanced_2013,chluba_15,ni_16}, we give
necessary and sufficient conditions on the structural parameters of  two helical phases such
that they are compatible.  We show that, locally, four types of compatible interface are possible:  
vertical, horizontal, helical and elliptical.  We discuss the mobility of these interfaces and give
examples of systems of interfaces that are mobile and could be used to fully transform a helical
structure from one phase to another.

These results provide a basis for the tuning of helical
structural parameters so as to achieve compatibility of phases.  In the case of transformations in crystals, this kind 
of tuning has led to materials with exceptionally low hysteresis and dramatically improved resistance to transformational 
fatigue.  Compatible helical transformations with low hysteresis and fatigue resistance would exhibit
an unusual shape memory effect involving both twist and extension, and may have potential applications 
as new artificial muscles and actuators.

\vspace{0.2in}
\noindent {\bf Keywords.} Phase transformation, helical structure, compatibility condition, microstructure, artificial muscle.

 \vspace{0.2in}
\tableofcontents

\vspace{0.4in}

\section{Introduction} \label{sect1}
A {\it helical structure} is a molecular structure obtained by taking the orbit of a single molecule 
under a \textit{helical group}.  The helical groups are the discrete groups of isometries (i.e., orthogonal transformations and translations)
that do not contain any pure translations and do not fix a point.  Helical structures are special cases of the more general concept of 
{\it objective structures} \cite{james_objective_2006}.  An objective structure is a discrete collection of atoms where corresponding atoms
in each molecule  ``see the same environment''.  For reasons that are not understood and are related
to the celebrated and far-from-solved ``crystallization problem'', objective structures are surprisingly well-represented among scientifically 
and technologically important nanostructures. 

Helical structures, in particular, are ubiquitous. Single-walled carbon nanotubes of any chirality as well as many biological molecules are helical.
They are especially common in non-animal viruses. 
An example of a helical structure that undergoes a phase transformation of the type discussed here is the tail sheath of bacteriophage T4 \cite{moody_67,moody_73}.  
This highly effective transformation is employed
by the T4 virus to drive the stiff tail tube through the cell wall of the host, and the viral DNA then enters the host through
the tail tube.   It can be seen that, aside from being helical, this transformation
has many features in common with martensitic phase transformations in crystals as first noticed
by Olson and Hartman \cite{olson_82},
such as being diffusionless,  and exhibiting a large coordinated shape change.  The
transformation also exhibits a latent heat \cite{arisaka_81}, and the transformation can be accurately modeled by a free energy
of the type used in studies of phase transformations in crystals \cite{falk_elasticity_2006}.
The forms of the helical generators  of both phases of this tail tube are special cases of those studied here,
but the phases of bacteriophage T4 do not precisely satisfy the strong conditions of compatibility found here. 
This is possibly related
to the fitness requirement of T4 to have sufficiently large hysteresis so that the transformation does not occur spontaneously.  
In fact, T4 has a trigger involving its tail fibers and baseplate which induces the transformation
only after it has attached itself to its host. 

Another typical example is the phase transformation of bacterial flagella, which are mechanically induced by the bacterium's molecular motor \cite{asakura_70,calladine_75, yonekura_03}. The bacterium can enter into ``swimming" mode to move toward a favorable chemical and thermal environment, 
or a ``tumbling" mode to alter its direction, by switching chirality of the flagellum.   The thermodynamics of such phenomena was explained by minimizing Gibbs free energy, 
including chemical and mechanical parts,  and thermomechanical phase diagrams are given in \cite{ricardo_2015}.  Other examples analyzed in Section \ref{sec9} are certain microtubules satisfying closely the compatibility conditions derived here.  Some widely studied nonbiological helical structures are the celebrated
carbon nanotubes (any chirality) and the nanotubes 
BCN  \cite{zhou_14}, GaN \cite{goldberger_03}, MoS$_2$ and WS$_{2}$ \cite{rad_11,zhang_10}.  

A recent development that can guide the of tuning of lattice parameters is an implementation
of density functional theory that incorporates helical symmetry, so that typical total energy calculations on nanotubes with no axial periodicity 
can be carried out with a few atom calculation\footnote{two atoms for twisted/extended chiral C nanotubes.} \cite{banerjee_spectral_2015,amartya_16,amartya_18,Amartya_PhD_Thesis}.
With these tools many different  twisted and extended nanotubes can be evaluated and phase transformations can be identified.  Twist and extension 
are not only valuable parameters to seek novel phase transformations, but they can also be used to seek special conditions of compatibility identified here.
Phase transformations having a change in magnetoelectric or transport properties  \cite{amartya_18} are particularly interesting in an engineering context 
due to the wire-like geometry of nanotubes,
together with the fact that their lengths can be macroscopic.

In this paper we develop a theory of diffusionless phase transformations in helical structures with a focus on low energy interfaces. 
The main guideline behind this study is to systematically replace the translation group by the helical group, but to otherwise exploit the 
patterns of thought used in atomistic and continuum theories of phase transformations \cite{ball_fine_1987, james_martensitic_2000, james_taming_2015, 
song_enhanced_2013, bhattacharya_self-accommodation_1992, ball_proposed_1992}.   
We show that familiar concepts from phase transformations in crystals such as {\it variants}, {\it twins}, {\it compatible interfaces}, {\it slip planes}  and {\it habit planes} have analogs 
in the helical case, though the analogy is not perfect.
 
To develop a phase transformation theory for helical structures, we consider structures generated by the
the largest Abelian discrete helical group acting on a finite set of points in $\mathbb{R}^3$, as described 
in Section \ref{sec2}.  This assumption includes structures generated by all the helical groups as long as we choose the 
generating molecule appropriately, as we 
explain below.  We then focus on compatible interfaces.   Compatibility is fundamentally a metric property, i.e., it concerns
conditions under which atoms that are close together before transformation remain close together after
transformation.  However, closeness of powers of group generators does not generally imply closeness of
molecules.  Therefore, we are led to develop a reparameterization of the group by {\it nearest neighbor generators}, as 
well as a suitable domain\textemdash analogous to a reference configuration of continuum mechanics\textemdash of
powers of the generators that imply a 1--1 relation between powers and molecules with 
convenient metric properties.  The reparameterization of the groups in terms of nearest neighbor generators and the
characterization of their domains is presented as a series of rigorously derived algorithms in Sections \ref{sec3} and \ref{sec4}.

The powers of generators are integers, but we notice that the resulting formulas for molecular positions
make sense for non-integer values of the powers and still the metric properties hold: closeness of powers
(whether integer or not) implies closeness of molecules.  Using  this observation, we then define compatibility
in continuum mechanics terms (cf., \cite{zigzag_16}) in Section \ref{sec5}.  Further, we work out all local solutions for compatible interfaces 
for all values of the group parameters (under mild restrictions) in Section \ref{sec6} and we give explicit closed form formulas for all compatible
interfaces.   The conditions on group parameters for compatible interfaces that 
we find are strictly analogous
to the condition $\lambda_2 = 1$ in theories of phase transformations in crystals \cite{james_05,cui_06,zarnetta_10,zjm}.  We conjecture that the dramatic reduction of the sizes of hysteresis loops
observed in transforming crystals when $\lambda_2$ is tuned to the value $1$ by compositional changes will also occur in helical structures when
the conditions found here are satisfied.  

Some of the local interfaces we find can all be extended to form loops or infinite lines (Section \ref{sect6.2}).   Moreover, one can combine various 
compatible interfaces to form nanostructures 
that are analogous to supercompatible interfaces in crystals \cite{chen_study_2013,gu_17}.  In particular in Section \ref{sec9} 
we find an interesting helical analog
of an austenite/martensite interface seen in supercompatible bulk crystals \cite{chen_study_2013,song_enhanced_2013} : it is moveable without slip and involves twinning, exhibits no stressed transition layer, but otherwise looks very different from usual austenite/martensite interfaces. 

An important special case is that in which the two phases are the same, i.e., the two helical lattices are related by an orthogonal transformation and translation.
In this case the compatible interfaces we find are  naturally interpreted as helical analogs of slips or twins.  We study this  case in Sections \ref{sec7} and \ref{sec8}.  Again, some
exact analogs with the crystalline case (e.g., mirror symmetry) emerge, but there are also cases that have no crystalline analog.  

It should be noted that interfaces of the type identified here in atomic structures are also seen in macroscopic hollow tubes
made of NiTi shape memory material \cite{sun_09}.  For related theory at macroscopic level see also \cite{zigzag_16}.

\section{Isometry groups and helical structures} \label{sec2}

As noted  in the introduction, a helical structure is the orbit of a molecule under a helical group.   
To define this precisely, let $\bfp_1, \dots, \bfp_M$ be the average positions of atoms in a  molecule  (with
corresponding species).  A helical group can be represented
schematically by $\{g_0, g_1, g_2,\dots \}$ with say $g_0 = identity$.  Consequently, a helical structure is given by $\{g_0(\bfp_1), \dots, g_0(\bfp_M) \} \cup
\{g_1(\bfp_1), \dots, g_1(\bfp_M) \} \cup \{g_2(\bfp_1), \dots, g_2(\bfp_M) \} \cup \dots $.

A helical group is a discrete group consisting of {\it isometries} that do not fix a point and which does not contain any pure translations.  
That is, each $g_i, \ i = 0, 1, 2, \dots$, is an isometry of the form
$(\bfQ_i | \bfc_i)$ in conventional notation, where $\bfQ_i$ is an orthogonal transformation on $\mathbb R^3$
and $\bfc_i \in \mathbb R^3$.  The condition that group contains no translation is
 $(\mathbf{Q}_i|\mathbf{c}_i) \neq (\mathbf{I}|\mathbf{c})$ for any $\mathbf{c} \neq 0$, and the condition that the group does not fix a point is
that there is no $\bfx_0 \in \mathbb R^3$ such that $(\bfQ_i - \bfI)\bfx_0 + \bfc_i = 0$ for every $i$.  (If there is such an $\bfx_0$, the resulting group is a point group.) 
The group product is $g_i g_j  = (\bfQ_i \bfQ_j | \bfQ_i \bfc_j + \bfc_i)$,
the identity is $g_0 = (\bfI | 0)$, and the inverses are $g_i^{-1}  = (\bfQ_i^T | -\bfQ_i^T \bfc_i)$.  
Further, the action of $g_i$ on $\bfp \in \mathbb R^3$, as indicated above,
is simply $g_i (\bfp) =
\bfQ_i \bfp + \bfc_i$.  Finally, the {\it orbit} of $\bfp$ is the collection $g_0(\bfp), g_1(\bfp), g_2 (\bfp), \dots$

Volume E of the International Tables of Crystallography (IT) contains a listing of {\it subperiodic groups}, i.e., the discrete isometry groups not containing a full set of 3 linearly independent translations. By definition, helical groups are discrete isometry groups containing no translations, and so they would appear to fall under the umbrella of this classification.  However, as is known to crystallographers, Volume E of IT does not contain the helical groups.  This is due to an unfortunate feature of the scheme by which IT is organized.  That is, in IT two isometry groups $G_1$ and $G_2$ are considered the same if they are related by an affine transformation,
$G_2  = a G_1 a^{-1}$ where $a = (\bfA | \bfc)$, $\det \bfA \ne 0$ (or, for some parts of IT, $\det \bfA >0$).
Here the product rule is the same as the one given above\footnote{For $\bfQ_i^T$ substitute $\bfA^{-1}$}.
By this classification, there are infinitely many helical groups and a listing according to the scheme of IT is impossible.  For example, by this classification scheme, the simple helical group specified in (\ref{G1}) below is actually an infinite number of different groups, one for each distinct choice of the angle\footnote{that is,  there is no affine transformation which relates 
$\{ h^m  : m \in \Z \}$ and $\{\tilde{h}^m : m \in \Z \}$ for $\theta \neq \tilde{\theta}.$ (see (\ref{G1}) for notation).}  $\theta$.

For the purpose of this paper, and, one could argue, for many other purposes in science and engineering, the
affine equivalence is not relevant, as one would often like to know ``what are all the groups".  Indeed, for the purpose of
exploring conditions of compatibility, we need explicit formulas for the groups with all the free parameters displayed\footnote{Abstract groups (multiplication tables) are not so useful, and for the applications in this paper it does
not matter if the abstract group suddenly gets bigger at a particular set of values of the parameters.}.   
We have rigorously derived the formulas for all the helical groups in this way \cite{formulas_18}. 
From this, every helical group is given by one of four formulas:
\beqs
& & \{ h^m  : m \in \Z \},   \label{G1} \\
& &  \{ h^m f^{s} : m \in \Z,\, s = 1,2  \},   \label{G2}  \\
& & \{ h^m g^n : m \in \Z,\, n = 1, \dots, i  \},    \label{G3}  \\
& &  \{ h^m g^n f^{s} : m \in \Z,\, n = 1, \dots, i,\, s = 1,2  \},  \label{G4} 
\eeqs
where
\begin{enumerate}
\item  $h = (\bfQ_{\theta}| \tau \bfe + (\bfI - \bfQ_{\theta})\bfz \}$,
$\bfQ_{\theta} \bfe = \bfe,\, |\bfe| =1,\, \bfz \in \mathbb R^3,\, \tau \in \mathbb{R}\setminus \{ 0\}$,
is a screw displacement with an  angle $\theta$ that is an irrational multiple of $2 \pi$.
\item  $g = (\bfQ_{\alpha}| (\bfI-\bfQ_{\alpha}) \bfz )$,
$\bfQ_{\alpha} \bfe = \bfe$,
is a proper rotation with  angle $\alpha = 2\pi/i,\, i \in \mathbb{N}, \, i \ne 0$.
\item $f = (\bfQ|\,
(\bfI-\bfQ) \bfz_1),\, \bfQ =-\bfI + 2 \bfe_1 \otimes \bfe_1,
\, |\bfe_1| = 1,
\bfe \cdot \bfe_1 = 0$ is a 180$^\circ$ rotation with
axis perpendicular to $\bfe$.   Here, $\bfz_1 = \bfz  + \xi \bfe$,
for some $\xi \in \R$.
\end{enumerate}

\begin{figure*}[ht!]
	\centering
		\includegraphics[width=\textwidth]{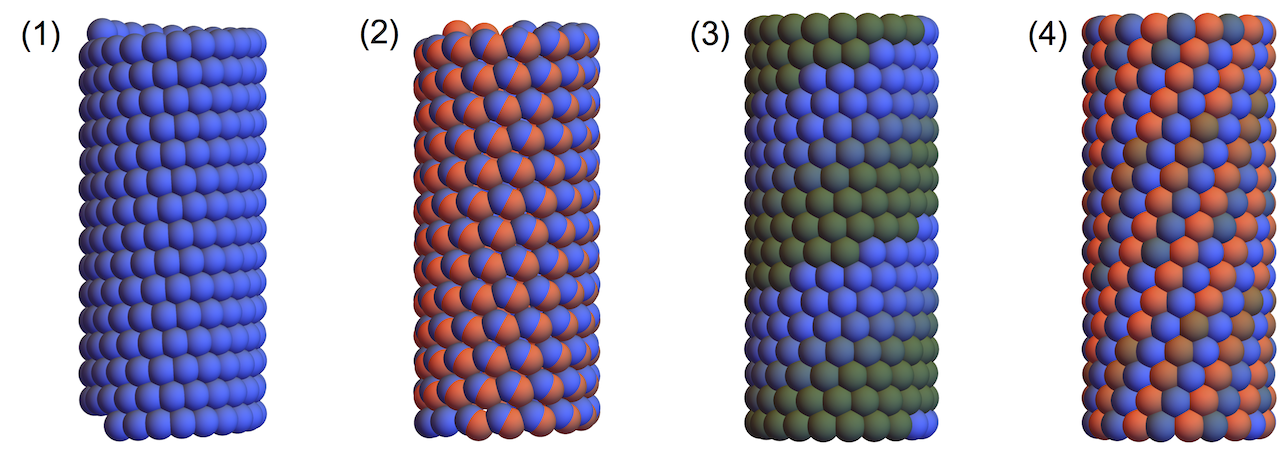}
	\caption{Helical groups corresponding to (\ref{G1}), (\ref{G2}), (\ref{G3}), (\ref{G4}), respectively.  Each picture is
	the orbit of a single ball under the corresponding group and the coloring is according to the powers $s$ or $n$. }
	\label{fig:helical_groups}
\end{figure*}

Groups (\ref{G1}) and (\ref{G3}) are Abelian (the elements commute) and clearly  (\ref{G1}) is
a subgroup of  (\ref{G3})\footnote{Indeed, if we restrict our attention to the case $i = 1$ for the groups in (\ref{G3}), then $g = id$ and we obtain the groups in (\ref{G1}).}.  Groups (\ref{G2}) and (\ref{G4}) are not Abelian because
the element $f$ does not commute with $g$ or $h$.  However, note that $f^2 = identity$.
Thus, the action of $f$ on a suitable molecule produces a nearby molecule.  Then, to get
the full structure in the orbit of (\ref{G4}), we operate the group  (\ref{G3}) on this pair of
molecules.  If the structure is reasonably dense with molecules, we can always consider this pair to be
close together, since there will be some molecule close to the axis through $\bfe_1$.
(Note that $\bfe_1$ is perpendicular to the axis of the cylinder on which the structure lies.)
But, if this pair is close together, then we can reasonably discuss compatibility in terms
of the group  (\ref{G3}).  We therefore focus on the largest collection of Abelian helical groups  (\ref{G3}) below, as this contains all possible Abelian groups that can generate a helical structure.

Another reason for this focus is based on discrete notions of compatibility studied in \cite{falk_elasticity_2006} 
which suggest that compatibility is related fundamentally to Abelian groups.  This concerns the interpretation of 
compatibility as relating to the process of returning to the same atom position by going around a loop
in the space $\mathbb Z^2$ of powers of the group elements.

We note that incidentally our assumptions also include several rod groups.  (Rod groups have periodicity
along the axis through $\bfe$.)  This is because we do not use
the assumption that $\theta$ is an irrational multiple of $2\pi$ in any of the results below.

Consider the two phases $a$ and $b$ each described as orbits of a molecule under the
general Abelian helical group (\ref{G3}). 
Assign group parameters
$\theta_{a,b} \in \mathbb{R}$,
$\alpha_{a,b} = 2 \pi/i_{a,b},\ i_{a,b} \in \mathbb{N}\setminus \{0\}$,
$\tau_{a,b} \in \R$, $\bfz_{a,b} \in \R^3$ and $\bfe_{a,b} \in \R^3,\
|\bfe_{a,b}| = 1$.  Define $\bfQ^{a,b}_{\xi}
\in$ SO(3) having axis $\bfe_{a,b}$ and angle $\xi$, i.e.,
\beq \label{getQ}
 \bfQ^{a, b}_{\xi} = \sin \xi \bfW^{a,b} + \cos{\xi} (\bfI - \bfe_{a,b} \otimes \bfe_{a,b}) + \bfe_{a,b} \otimes \bfe_{a,b},
\eeq
respectively, where $\bfW^{a,b} = -\bfW^{a,b\ T} = 
-\bfe_1^{a,b} \otimes \bfe_2^{a,b} + \bfe_2^{a,b} \otimes \bfe_1^{a,b}$, where $\bfe_1^{a,b}, \bfe_2^{a,b}, \bfe_{a,b}$
is a right-
 orthonormal basis.
From this hypothesis\textemdash in particular that $\bfe_a, \bfe_b$ do not depend
on $\xi$\textemdash we also have that
\beq
\frac{d}{d\xi} \bfQ^a_{\xi} = \bfQ^a_{\xi} \bfW^a, \quad
\frac{d}{d\xi} \bfQ^b_{\xi} = \bfQ^b_{\xi} \bfW^b.
\eeq
Note that $\bfQ_{\xi}^{a,b}\bfW^{a,b}=\bfW^{a,b}\bfQ_{\xi}^{a,b}$, respectively.
The two elements $h^m$ and $g^n$ can be combined
and the group (\ref{G3}) can be written as
\beqs
G_a &=& \left\{ \bigg( \bfQ^{a}_{m \theta_{a} + n \alpha_{a}} \bigg|\
m \tau_a \bfe_a + (\bfI -  \bfQ^{a}_{m \theta_{a} + n \alpha_{a}})\bfz_a
 \bigg): m \in \Z,\ n = 1,\ldots, i_a   \right\}, \nonumber \\
G_b &=& \left\{ \bigg(  \bfQ^{b}_{m \theta_{b} + n \alpha_{b}}\, \bigg|\
m \tau_b \bfe_b +\, (\bfI -  \bfQ^{b}_{m \theta_{b} + n \alpha_{b}})\bfz_b
\bigg):  m \in \Z,\ \, n =1,\ldots, i_b  \right\},  \label{groups}
\eeqs
with sub/superscripts $a$ and $b$ denoting the two phases.
It can be seen from the formulas in (\ref{groups}) that $G_{a,b}$ are
groups under the product rule for isometries given in the introduction, e.g.,
\beqs
\lefteqn{\bigg( \bfQ^{a}_{m \theta_{a} + n \alpha_{a}} \bigg|\
m \tau_a \bfe_a + (\bfI -  \bfQ^{a}_{m \theta_{a} + n \alpha_{a}})\bfz_a
 \bigg)\bigg( \bfQ^{a}_{m' \theta_{a} + n' \alpha_{a}} \bigg|\
m' \tau_a \bfe_a + (\bfI -  \bfQ^{a}_{m' \theta_{a} + n' \alpha_{a}})\bfz_a
 \bigg)}  \nonumber \\
& &   = \bigg( \bfQ^{a}_{(m+m') \theta_{a} + (n+ n') \alpha_{a}} \bigg|\
(m+m') \tau_a \bfe_a + (\bfI -  \bfQ^{a}_{(m+m') \theta_{a} + (n+n') \alpha_{a}})\bfz_a
 \bigg). \label{ver}
\eeqs
Note that if $(n+n')> i_a$, then $(n+n')$ in (\ref{ver}) can be replaced by
$(n+n')\ {\rm mod}\ i_a$.

The two groups $G_a$ and $G_b$ have different parameters.  Our main task is to determine
all choices of these parameters that give compatible interfaces.

These groups act on position vectors of atoms in one molecule as described above.  For the purpose of studying compatibility
we picture the molecule as reasonably compact and we discuss conditions of compatibility in terms of
its center of mass.  The precise statement of this assumption is that a typical diameter of the molecule is
on the order of, or less than, the nearest neighbor distance between molecules defined in Section \ref{sec3}.  The
examples of helical structures given in the introduction (including the tail sheath of bacteriophage T4) have
this property.   

The helical structures $a$ and $b$ consist of
atomic positions given by the corresponding
groups each acting on its respective center-of-mass position $\bfp_{a,b}$. 
Therefore, the  center-of-mass positions of the helical structures 
are given by 
\beqs
\bfy_a(n,m) &=& \bfQ^{a}_{m \theta_{a} + n \alpha_{a}}(\bfp_a - \bfz_a) +
m \tau_a \bfe_a + \bfz_a,\ m\in \mathbb{Z},\ n= 1, \ldots, i_a, \nonumber \\
\bfy_b(n,m) &=& \bfQ^{b}_{m \theta_{b} + n \alpha_{b}}(\bfp_b - \bfz_b) +
m \tau_b \bfe_b +\bfz_b,\ m\in \mathbb{Z},\ n= 1,\dots, i_b.
\eeqs
In the language of objective structures the structure $a$ as viewed from the center of 
mass position $\bfy_a(m,n)$ is exactly the same as the structure viewed from $\bfy_a(m',n')$,
for any choices of the integers $m,n,m',n'$ (even though there may be no atoms at these centers of mass).

\section{Nearest-neighbor reparameterization of the groups}\label{sec3}

In this section, we drop the superscripts $a,b$ and consider a single helical group $G$ defined 
as above by
\beq
G =   \left\{ g(n,m) : m \in \Z,\ n = -i,\ldots,0,\dots, i  \right\},
\quad g(n,m) = \bigg( \bfQ_{m \theta + n \alpha} \bigg|\
m \tau \bfe + (\bfI -  \bfQ_{m \theta + n \alpha})\bfz
 \bigg).  \label{G}
\eeq
Here, we have extended the domain of integers for $n$ to include $-i, -i + 1, \ldots, 0$ for technical reasons.  All this does is simply count some group elements more than once, which does not change $G$ (the collection of all such elements).
As before, let the atomic positions be given by $\bfy(n,m) = g(n,m)(\bfp)$.
We assume without loss of generality that, $\bfp-\bfz \neq 0$, $(\bfp - \bfz) \cdot \bfe = 0$, $\tau \ne 0$ 
to avoid degenerate structures (lines, rings).   Note
 that $g(0,0) = id$, so that 
$\bfy(0,0) = \bfp$. 

Conditions of compatibility between phases ensure that nearby atoms before 
transformation remain near each other after transformation.  Thus,  distances are
important.  However, the standard 
parameterization of the groups given above
in terms of $n$ and $m$ does not in general have the property that
if $\bfy(n,m)$ is near $\bfy(n',m')$ in $\mathbb R^3$, then $(n,m)$ is near $(n',m')$
in $\mathbb Z^2$.  Therefore,
it is desirable to reparameterize the groups so that nearest and next-to-nearest 
neighbors of any point $\tilde{\bfy}(n,m)$ are
$\tilde{\bfy}(n+1,m)$, $\tilde{\bfy}(n,m+1)$.  Here $\tilde{\bfy}(n, m)$ is the deformation induced by a new parameterization of the group. Because $G$ is an isometry group (i.e., preserves distances),
we then 
have at least four nearest and next-to-nearest neighbors with positions
$\tilde{\bfy}(n+1,m)$, $\tilde{\bfy}(n-1,m)$, $\tilde{\bfy}(n,m+1)$, $\tilde{\bfy}(n,m-1)$. 
(Of course, there may be additional nearest, or next-to-nearest,
neighbors such as the case when $\tilde{\bfy}(n,m)$ is surrounded by six nearest
neighbors.)   As we show below, under mild assumptions on the group parameters,
it is always possible to find such nearest neighbor generators.

A nearest neighbor reparameterization implies that nearest (and next to 
nearest) neighbors of the reparameterized structure correspond to nearest neighbors of $(0,0)$ 
in the 2D lattice $\mathbb Z^2$.  Thus, we consider
\begin{equation}
\begin{aligned}
dist^2(n,m)  = |\bfy(n,m) - \bfy(0,0)|^2 &= |(\bfQ_{m \theta + n \alpha} - \bfI) (\bfp - \bfz) + m \tau \bfe |^2\\
&= 4r^2 \sin^2\! \bigg(\frac{m \theta + n \alpha}{2}\bigg)  + m^2 \tau^2,  \label{dist}
\end{aligned}
\end{equation}
where $r = |\bfp - \bfz|$ (and subject to appropriate constraints).  

Nearest and second nearest neighbors are obtained by minimizing\footnote{In this formula, we have chosen the reference atom $\mathbf{y}(0,0)$ simply for convenience.  Notice that the distance from the reference atom to its nearest and next nearest neighbors  is independent of the particular choice of reference atom.  For this reason, we are free to make this choice.}
$dist^2 (n,m)$ over integers $m,n$ with $n \in \{ -i,\ldots,0 , \dots, i \}$.   Let $m_0$ be the smallest  integer greater than $ 2 r^2/\tau^2 - 1/2$.  No minimizer of (\ref{dist}) can  have
$|m| > m_0$, because, otherwise, decreasing $|m|$ by one decreases $dist^2$.   
Thus, since $n \in \{ -i,\ldots, 0, \dots, i\}$,  the minimization of $dist^2(n,m)$ is a finite integer minimization problem, and both
first and second nearest neighbors can always be found.  However, we have 
non-uniqueness because  $dist^2(n,m) = dist^2(-n,-m)$, and there may be additional degeneracy
as mentioned above.

\begin{wrapfigure}{r}{0.22\textwidth}
\vspace{-5mm}
\begin{center}
 \includegraphics[width=0.2 \textwidth]{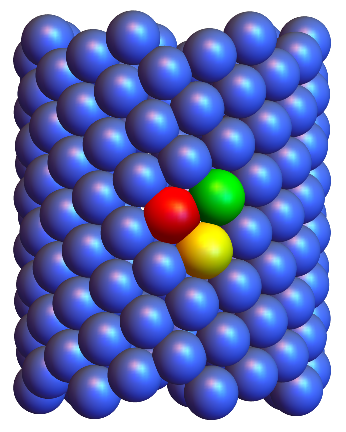}
 \caption{Illustration of nearest neighbor generators found by the algorithm (see text).  Red is mapped to
yellow by $g_1$ and red is mapped to green by $g_2$.} \label{fig:nng}
\end{center}
\vspace{-5mm}
\end{wrapfigure}

Since we have the existence of minimizers, we can suppose  a minimizer of $dist^2(m,n)$ is given by
$(n_1, m_1), \ n_1 \in \{ -i, \ldots, 0, \dots, i\}$.  Further, we can consider the auxiliary minimization problem 
\begin{align}
\min_{m,n} \{ dist^2(n,m)\colon m_1 n \neq n_1 m\},  \label{mingen}
\end{align} 
and suppose $(n_2, m_2),  \ n_2 \in \{ -i, \ldots, 0 , \dots, i\}$ is a minimizer to this problem.   Hence,  we study the group elements $g_1 = g(n_1, m_1)$ and $g_2 = g(n_2, m_2)$. Here, we call $g_1$ the \textit{nearest neighbor generator} and $g_2$ is the \textit{second nearest neighbor generator}\footnote{It is possible for $dist(n_1, m_1) = dist(n_2,m_2)$, in which case the second nearest neighbor is actually the also the nearest neighbor.}.
The meaning of the constraint $m_1 n_2 \neq m_2 n_1$
is explained in detail below, but clearly it
serves to rule out $n_2 = -n_1, m_2 = -m_1$ and other behavior such as $g_1^2 = g_2$ which
would be problematic for a concept of compatibility.

We will show that the given group $G$ is generated by the nearest neighbor generators 
$g_1$ and $g_2$.  Let  
\beq
 G' = \{g_1^p g_2^q:  (p,q) \in \Z^2 \},
 \quad {\rm (omit\ repeated\ elements)}.  \label{G'}
\eeq
Since $g_1$ and $g_2$ are both elements of the group $G$ (as well as their products), $G'$ is a subgroup of $G$.  
To show that $G' = G$, we will argue by contradiction.  The basic idea is to define a unit cell\footnote{A {\it unit cell} in this case is the direct analog of that for the translation group,
i.e., the images of the unit cell under the group cover the cylinder $ \calC$ defined just after (\ref{tangents}), and images corresponding
to distinct group elements are distinct.} based on $g_1$ and $g_2$.  Since these are nearest neighbor generators, this unit cell contains a single atom at one of the vertices.   However, in supposing that $G' \neq G$, we will argue that there must be another atom inside this unit cell.  This is the desired contradiction. 

To define the unit cell, we first note that the formula $g_1^p g_2^q$ also makes sense when $p$ and
$q$ are real numbers.  The two vectors 
\beq
\frac{d}{d \xi} g_1^{\xi}(\bfp)|_{\xi = 0}, \quad \frac{d}{d \eta} g_2^{\eta}(\bfp)|_{\eta = 0}, \label{tangents}
\eeq
define tangent vectors on the cylindrical surface $\calC = \{  \bfz + r (\cos \omega\, \bfe_1 + \sin \omega\, \bfe_2) + \zeta \bfe: 0 < \omega \le 2 \pi, \ \zeta\in \R \}$ where $\bfe_1, \bfe_2, \bfe$
are orthonormal.  These two tangents are not parallel  since by construction $m_1 n_2 \ne m_2 n_1$. We call this condition the {\it non-degeneracy condition}\footnote{If 
$m_1 n_2 = m_2 n_1$, then the two functions $g_1^{\xi}(\bfp),  g_2^{\eta}(\bfp)$ parameterize
the same curve on the cylinder.  In this case, $g_1$ and $g_2$ cannot be used to define a unit cell of the cylinder.}.  
Hence,
\beq
\calU = \{ (g_1^{\xi}(\bfp),  g_2^{\eta}(\bfp)) : 0 \le \xi < 1, \ 0 \le \eta < 1 \} \subset \cal C
\eeq
is a unit cell for $G'$ and has positive area. 

To show $G' = G$, we argue by contradiction.  We suppose that there are integers $\tilde{n}, \tilde{m}$
such that the isometry $\tilde{g} = g(\tilde{n}, \tilde{m}) \in G$ but $\tilde{g}  \notin G'$.  Let
$\xi, \eta \in \mathbb R$ satisfy
\beq
\tilde{m}  = \xi m_1  + \eta m_2  \quad \tilde{n} = \xi n_1 + \eta n_2.  \label{2x2}
\eeq
Note that (\ref{2x2})
is solvable for $(\xi, \eta) \in \mathbb R^2$ because we have assumed $m_1 n_2 \ne n_1 m_2$.
Since $\tilde{g} \notin G'$,  $(\tilde{m}, \tilde{n})$ are not both zero and at least 
one of $\xi$ and $\eta$ is not an integer.   By subtracting suitable integers from $\tilde{m}$
and $\tilde{n}$, we can assume without loss of generality that $\xi, \eta \in[-1/2, 1/2]$ and 
$\xi, \eta$ are not both zero.

We make the standing assumption on $\theta$ and $\alpha$ that $-\pi/2  \leq m_1 \theta + n_1 \alpha \leq \pi/2$ and
$-\pi/2 \leq m_2 \theta + n_2 \alpha \leq \pi/2$.  These reasonable assumptions imply 
that the unit cell $\calU$ does
not extend more than halfway around the cylinder $\calC$.  We also note that $\sin^2(\delta/2)$
is a strictly convex, even  function of $\delta$ on the interval $-\pi/2 \le \delta \le \pi/2$.  Then, using the
distance formula in (\ref{dist}), we have that
\beq
|g_1^{\xi} g_2^{\eta}(\bfp) - \bfp|^2 =4 r^2 \sin^2 \bigg( \frac{\xi (m_1 \theta + n_1 \alpha) + \eta(m_2 \theta + n_2 \alpha)}{2} \bigg)  + (\xi m_1 + \eta m_2)^2 \tau^2.
\eeq
By changing $g_1$ or $g_2$ to its inverse, if necessary (which does not change the distances
$|g_{1,2}(\bfp) - \bfp|$),  we can assume that $m_1 \ge 0, m_2 \ge 0$.

Using the evenness of $\sin^2$ (i.e., $\sin^2(\delta/2) = \sin^2(|\delta|/2)$) and its monotonicity on 
on the interval $(0, \pi/2)$, we observe that
\begin{equation}
\begin{aligned}
\sin^2 \bigg( \frac{\xi (m_1 \theta + n_1 \alpha) + \eta(m_2 \theta + n_2 \alpha)}{2} \bigg) &=
\sin^2 \bigg( \frac{| \xi (m_1 \theta + n_1 \alpha) +  \eta  (m_2 \theta + n_2 \alpha)|}{2} \bigg)  \\
& \le  \sin^2 \bigg( \frac{(1/2) |(m_1 \theta + n_1 \alpha)| + (1/2)  |(m_2 \theta + n_2 \alpha)|}{2} \bigg) \\
& \le  \frac{1}{2} \sin^2 \bigg( \frac{ m_1 \theta + n_1 \alpha}{2}\bigg) + \frac{1}{2} \sin^2\bigg( \frac{m_2 \theta + n_2 \alpha}{2}\bigg).  \label{nest}
\end{aligned}
\end{equation}
since $\xi, \eta \in [-1/2,1/2]$.  Here, the last step follows from the convexity of $\sin^2 (\delta/2)$ on $[-\pi/2, \pi/2]$.   This calculation,
together with the observation $(\xi m_1 + \eta m_2)^2 \le (1/4)(m_1 + m_2)^2$, shows that
\beqs
|g_1^{\xi} g_2^{\eta}(\bfp) - \bfp|^2 &\le& \frac{1}{2}|g_1(\bfp) - \bfp|^2 - (\tau^2/2) m_1^2 + \frac{1}{2} |g_2(\bfp) - \bfp|^2 - (\tau^2/2) m_2^2 + (\tau^2/4)(m_1 + m_2)^2 \nonumber \\
&\le& |g_2(\bfp) - \bfp|^2  - (\tau^2/4)(m_2 - m_1)^2  \nonumber \\
&\le& |g_2(\bfp) - \bfp|^2.  \label{nest1}
\eeqs

Here we have used that $g_2(\bfp)$ is a {\it second} nearest neighbor of $\bfp \in \calC$.  
Following back through the inequalities, we see that equality holds in (\ref{nest1}) only if $g_1 = g_2$ 
which is forbidden by our hypothesis $m_1 n_2 \ne m_2 n_1$. And also $g_1^{\xi} g_2^{\eta} \neq g_1$ by this 
hypothesis. Thus we reach the conclusion that $|g_1^{\xi} g_2^{\eta}(\bfp) - \bfp| < |g_2(\bfp) - \bfp|$
which contradicts that $g_2(\bfp)$ is a second nearest neighbor of $\bfp$.
Hence,  $G' = G$.

We collect these results in the form of an algorithm below:

\vspace{2mm}
\noindent {\bf Algorithm: nearest neighbor generators}. Let $G$ be given by (\ref{G}) with
group parameters $\tau \ne 0, \alpha = 2 \pi/i,\ i \in \mathbb N \setminus \{ 0\}$, 
and let $r = |\bfp - \bfz|>0$, $(\bfp - \bfz) \cdot \bfe = 0$, the unit vector $\bfe$ being
on the axis of $\bfQ_{(\cdot)}$.
Let $(n_1, m_1)$ be a minimizer of the finite-dimensional
minimization problem,
\begin{equation}
\min_{\begin{array}{l} n \in \{-i,\ldots, 0, \dots, i\} \\  m \in \mathbb Z \cap [-h,h] \end{array}}4r^2 \sin^2\! \bigg(\frac{m \theta + n \alpha}{2}\bigg)  + m^2 \tau^2 \label{alg1},
\end{equation}
and let $(n_2, m_2)$ be a minimizer of  (\ref{alg1}) subject to the constraint $m_1 n  \ne m n_1$.    Here, $h $ is the smallest integer greater than 
$ 2 r^2/\tau^2 - 1/2$.  Assume that $-\pi/2  \leq m_1 \theta + n_1 \alpha \leq \pi/2$ and
$-\pi/2 \leq m_2 \theta + n_2 \alpha \leq \pi/2$.  Then nearest neighbor generators of $G$ are given by
\beq
g_1 = g(n_1, m_1), \quad g_2 = g(n_2, m_2).
\eeq

To summarize in words, this procedure provides a nearest neighbor parameterization for \textit{any} Abelian group that generates a helical structure (i.e., those groups in (\ref{G1}) or (\ref{G3})), as long as the parameters are such that the distance between neighboring atoms is sufficiently small compared to the radius of the cylinder.

\section{Domain of powers of the nearest neighbor generators}\label{sec4}

Let  nearest neighbor generators $g_1 = g(n_1,m_1)$ and $g_2 = g(n_2,m_2)$ be given
as above so that $m_1 n_2 \ne m_2 n_1$ with $n_1, n_2 \in \{-i,\ldots,0 , \dots, i \}$.   
Note that $m_1$ and $m_2$ cannot both
be zero, so first we assume $m_1 \ne 0$.
There are always values
$(p,q) \ne (0,0)$ such that
$g_1^pg_2^q = id$.  Under our hypotheses, these are  those $(p,q) \in \Z^2$ satisfying
\beq
p n_1 + q n_2 = n i, \quad p m_1 + q m_2 = 0.  \label{int}
\eeq
Since $n_1m_2 - n_2 m_1 \ne 0$,
the solutions of this system are pairs of integers $(p,q)$ satisfying
\beq
 p = n\, i\, \Big( \frac{m_2}{n_1 m_2 - n_2 m_1}\Big), \quad
 q = n\, i\,  \Big( \frac{-m_1}{n_1 m_2 - n_2 m_1}\Big).  \label{pqsystem}
\eeq
The parameterization above of $G$ consists of powers of the two nearest neighbor generators.
A group for generating a helical structure should not have repeated elements\footnote{In general, group theory assumes an indexing with no repeated elements.}
 since we want one an only one atom at each point in the orbit. However, arbitrary powers of $g_1, g_2$ will give infinitely many pairs of powers that describe
 the same atom.   In this section, we give a general procedure for finding a suitable domain of these powers that specifies the group $G$ completely and has 
 no repeated elements.

From these solutions and the assumption $m_1 \ne 0$, 
there is a unique solution $(p^{\star}, q^{\star})$ of
$(p,q) \ne (0,0)$ of $g_1^pg_2^q = id$ that contains
the smallest positive value of $q$.  
$\calD = \Z \times \{1,\dots, q^{\star}\}$ serves as a domain for the powers $(p,q)$
with the property that there are no repeated elements. 

\begin{wrapfigure}{r}{0.22\textwidth}
	\vspace{-5mm}
	\begin{center}
		\includegraphics[width=0.2 \textwidth]{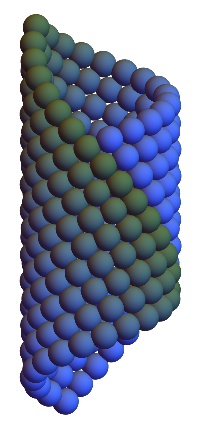}
		\caption{Illustration of the calculation of the domain $\calD$. 
			In this case $q^{\star} = 14$ and the shading
			is according to the value of $q$.} \label{fig:qstar}
	\end{center}
	\vspace{-5mm}
\end{wrapfigure} 

To see this, note first that
any $q \in \Z$ is expressible in the form $q = j q^{\star} + q'$ where $j \in \Z$ and
$q' \in \{1,\dots, q^{\star}\}$.  Thus, 
$g_1^p g_2^q = g_1^{p' + jp^{\star}}g_2^{q'+ jq^{\star}} = g_1^{p'}g_2^{q'}$,
so the powers $(p,q)$ of any group element can be assumed to lie in $\calD$.

Thus we only need to show that two distinct pairs of powers $(p,q) \in \cal D$ and $(p', q')\in \calD$ 
do not give the same group element.  To see this, assume that
$g_1^{p}g_2^{q} = g_1^{p'}g_2^{q'}$ for $q, q' \in \{1, \dots, q^{\star} \}$
and, without loss of generality, $q\ge q'$.
Then 
\beq
g_1^{p-p'}g_2^{q-q'} = id.  \label{inter}
\eeq  

By the minimality property of 
$q^{\star}$, this implies that $q = q'$, and (\ref{inter}) is
reduced to $g_1^{p-p'} = id$.  The 
latter is possible under the condition $m_1 \ne 0$
and the assumptions on the group parameters $\tau, \bfz$ if and 
only if $p= p'$.
We have shown that each point $(p,q) \in \calD$ (using nearest-neighbor
generators $g_1, g_2$) corresponds to
one and only one element of $G$.

In this argument, we made the additional hypothesis that $m_1 \ne 0$.  This is without loss of generality.   Note that 
the condition  $n_1m_2 - n_2 m_1 \ne 0$ forbids $m_1$ and $m_2$ to vanish 
simultaneously.  Thus if $m_1 = 0$, then $m_2 \neq 0$.  Consequently, we can simply replace $g_1$ with $g_2$ and $g_2$ with $g_1$ if this is the case.  
This generates the same structure.  

So far, we have deduced that $G = \{ g_1^{p} g_2^{q} \colon (p,q) \in \mathcal{D}\}$ (with $m_1 \neq 0$), and this description has no repeated elements. 
It follows from a slight modification of the proof given above that an alternative description of this group is $G = \{ g_1^{p} g_2^{q} \colon (p,q) \in \mathcal{D}_{q_0}\}$ where
$\mathcal{D}_{q_0} = q_0 + \mathcal D = \{q_0 + 1, \dots, q_0 + q^{\star} \} $ for any integer $q_0$.

We summarize these results. 

\vspace{2mm}

\noindent {\bf Algorithm: domain of powers of the generators}.  Assume the conditions on parameters
listed  above for the
derivation of nearest neighbor generators, and choose the labeling of the generators $g_1$ and $g_2$ so that $m_1 \neq 0$.  Let $\hat{q}(n)  =   -n\, i\, m_1/(n_1 m_2 - n_2 m_1)$  and define
	\beq
	q^{\star}  = \min_{\begin{array}{l} n \in \mathbb Z\setminus \{0\} \\ \hat{q} (n) > 0 \\ 
			\hat{q}(n) \in \mathbb Z \end{array}} \hat{q}(n) \label{alg2} .
	\eeq
Then for any $q_0 \in \mathbb{Z}$, $\calD_{q_0} =  \mathbb Z \times\{q_0 + 1, \dots, q_0 + q^{\star} \} $ has the property that
	$G = \{ g_1^p g_2^q : (p,q) \in \calD_{q_0} \} $, and this description has no repeated elements.
	
\vspace{2mm}

\section{Discrete vs.~continuum concepts of compatibility}\label{sec5}
Compatibility in discrete structures is fundamentally connected with Abelian groups.
It concerns the fact that a product of group elements that corresponds to a loop 
in the space of powers $\Z^n$ of the $n$ generators gives the identity. When the
group is the infinitesimal translation group operating on $\R^n$ and the action
of the group on vector fields $\bfv(\bfp), \bfp \in \R^n$ is arranged appropriately, this gives the 
usual notion of calculus, i.e., conditions under which $\bfv = \nabla \varphi$.  
Another example is given in \cite{falk_elasticity_2006} in which molecules interact according
to their positions and orientations.  

As far as we are aware, there is no theory of compatibility for interfaces between atomistic phases
having different sets of structural parameters.  This would involve the formulation
of appropriate definitions that say that one can set up a correspondence of neighboring molecules, 
such that corresponding molecules before transformation remain close after partial transformation, when
separated by a phase boundary.

On the other hand, there is a straightforward and simple notion of compatibility at interfaces for
discrete structures that is directly inherited from continuum ideas.  In the present case
it is the following. After passing to nearest neighbor parameterization as above, we
consider the formulas $\bfy_{a,b}(p,q)$ defined above. 
The functions $\bfy_a(p,q)$ and $\bfy_b(p,q)$ are defined on different
discrete domains, but they make perfect sense if they are extended to a suitable strip in $\mathbb{R}^2$.  
We wish to use continuity of these functions to impose
compatibility, and for this purpose we will extend their domains to
all values of $(p,q) \in \R \times (0, q^{\star}_{a,b}]  \subset \R^2$. (For simplicity we take $q_0=0$.)
Restrict the structure $a$ to be locally on
one side of an interface $(p,q) = (\hat{p}(s), \hat{q}(s)),\ s_1 < s < s_2$ and 
$b$ to be on the other side.
Then, under suitable smoothness assumptions, the standard continuum notion of compatibility is 

\beq
\nabla_{p,q} \bfy_a(\hat{p}(s), \hat{q}(s)) - \nabla_{p,q} \bfy_b(\hat{p}(s), \hat{q}(s))
 = \bfa(s) \otimes \bfn(s), \quad \bfn(s) = (-\hat{q}'(s), \hat{p}'(s));
\eeq 
that is, equivalently,
\beq
\Big (\partial_p \bfy_a(\hat{p}(s), \hat{q}(s)) - \partial_p \bfy_b(\hat{p}(s), \hat{q}(s))\Big)  \hat{p}'(s)  +  \Big(\partial_q \bfy_a(\hat{p}(s), \hat{q}(s)) -  \partial_q \bfy_b(\hat{p}(s), \hat{q}(s))\Big) \hat{q}'(s) = 0. \label{c}
\eeq
Note that there is a lot of freedom here.   Even with this canonical interpolation, note that 
we have complete freedom on where to place
the interface $(\hat{p}(s), \hat{q}(s))$ between the discrete positions.  In this paper
we define compatibility by using (\ref{c}).

Note that we use the same interface $(\hat{p}(s), \hat{q}(s))$ in the reference domain
for both structures in $\R^2$.  This also does not seem to be
restrictive, since we allow the group parameters as well as the $(0,0)$
positions to be assignable.  However, the interface has to respect the two
(potentially different) periods $q^{\star}_{a}$ and $q^{\star}_{b}$.
  If, say, $q^{\star}_{a} < q^{\star}_{b}$
and the interface extends into the region $q^{\star}_{a} < q < q^{\star}_{b}$ then 
$\bfy_a$ is undefined on this region: there
are no molecules from structure $a$ that can be matched with those of $b$
across this part of the interface.  
Thus we assume $0 < \hat{q}(s) \le min \{ q^{\star}_{a},q^{\star}_{b} \}$.   Moreover, we
solve rigorously the local problem: under mild hypotheses we find necessary and sufficient conditions 
that (\ref{c}) is satisfied in a sufficiently small neighborhood  $s_1 < s <  s_2$ on which 
$0 < \hat{q}(s) \le min \{ q^{\star}_{a},q^{\star}_{b} \}$.  Then we show that some of these
solutions can be extended to larger intervals.

\section{Local compatibility for nearest neighbor generators}\label{sec6}

\noindent {\bf Hypotheses on the groups.}  Let $G_a \ne G_b$ be the two helical 
groups  
\beqs
G_a &=& \left\{ \bigg( \bfQ^{a}_{m \theta_{a} + n \alpha_{a}} \bigg|\
m \tau_a \bfe_a + (\bfI -  \bfQ^{a}_{m \theta_{a} + n \alpha_{a}})\bfz_a
 \bigg): m \in \Z,\ n = 1,\dots, i_a   \right\}, \nonumber \\
G_b &=& \left\{ \bigg(  \bfQ^{b}_{m \theta_{b} + n \alpha_{b}}\, \bigg|\
m \tau_b \bfe_b +\, (\bfI -  \bfQ^{b}_{m \theta_{b} + n \alpha_{b}})\bfz_b
\bigg):  m \in \Z,\ \, n = 1,\dots, i_b  \right\},
\eeqs
where 
$\bfz_{a,b}, \bfe_{a,b} \in \mathbb{R}^3$, 
the unit vector $\bfe_{a,b}$ being on the axis of $\bfQ_{(.)}^{a,b}\in$ SO(3), and  
$\tau_{a,b} \ne 0, \theta_{a,b}\in \mathbb{R}, \alpha_{a,b} = 2 \pi / i_{a,b}, i_{a,b} \in \mathbb{N}$.
The helical structures are generated by applying these groups on $\bfp_{a,b}\in \mathbb{R}^3$ with 
$r_{a,b}=|\bfp_{a,b}-\bfz_{a,b}|>0$, $(\bfp_{a,b}-\bfz_{a,b})\cdot\bfe_{a,b}=0$.
Nearest neighbor generators given by (\ref{alg1}) and (\ref{alg2}) generate the re-parameterized groups
\beqs
G_a &=& \left\{ \bigg( \bfQ^{a}_{p \psi_{a} + q \beta_{a}} \bigg|\
(p m_1^a + q m_2^a) \tau_a \bfe_a + (\bfI -  \bfQ^{a}_{p \psi_{a} + q \beta_{a}})\bfz_a
 \bigg): p \in \Z,\ q = 1,\dots, q_a^{\star}   \right\}, \nonumber \\
G_b &=& \left\{ \bigg(  \bfQ^{b}_{p \psi_{b} + q \beta_{b}}\, \bigg|\
(p m_1^b + q m_2^b) \tau_b \bfe_b + (\bfI -  \bfQ^{b}_{p \psi_{b} + q \beta_{b}})\bfz_b
\bigg):  p \in \Z,\ \, q = 1,\dots, q_b^{\star}  \right\}\label{nngroups}
\eeqs
 with no repeated elements. 
Here $\psi_{a,b}=m_1^{a,b}\theta_{a,b}+n_1^{a,b}\alpha_{a,b}, \beta_{a,b}=m_2^{a,b} \theta_{a,b} + n_2^{a,b}\alpha_{a,b}$, and, to satisfy the algorithms for construction of the nearest neighbor generators, the integers $m_1^{a,b}$ and $m_2^{a,b}$ satisfy the condition $m_1^{a,b} n_2^{a,b} \neq m_2^{a,b} n_1^{a,b}$.  The latter is equivalent to  $m_2^{a,b}\psi_{a,b} \neq m_1^{a,b} \beta_{a,b}$ and
$-\pi/2 \le \psi_{a,b}, \beta_{a,b} \le \pi/2$.

\vspace{2mm}
\noindent {\bf Hypotheses on the interface, compatibility and orientability.} The two helical structures generated by (\ref{nngroups}) are
\beqs 
\bfy_a(p,q) &=& \bfQ^{a}_{p \psi_{a} + q \beta_{a}}\bfr_a +
(p m_1^a + q m_2^a) \tau_a \bfe_a + \bfz_a,\  p \in \Z,\ q = 1,\dots, q^{\star}_a,
\nonumber \\
\bfy_b(p,q) &=& \bfQ^{b}_{p \psi_{b} + q \beta_{b}}\bfr_b +
(p m_1^b + q m_2^b) \tau_b \bfe_b + \bfz_b,\ \ \  p \in \Z,\ q = 1,\dots, q^{\star}_b \label{pos2},
\eeqs
where $\bfr_{a,b} = \bfp_{a,b} - \bfz_{a,b}$, $\bfr_{a,b} \cdot \bfe_{a,b} = 0$.  
Following the discussion of Section \ref{sec5},  we extend the domain of $\bfy_{a,b}(p,q)$ to a suitable subset of $\mathbb{R}^2$, and we seek an arclength parameterized twice continuously differentiable  curve $(\hat{p}(s),\hat{q}(s)) \in \mathbb{R}^2$ defined on $s_1 < s < s_2$ and satisfying  $0\le \hat{q}(s)\le min \{ q_a^{\star},q_b^{\star} \}$, $\hat{p}'(s)^2+\hat{q}'(s)^2=1$.
The condition that the interface can be parameterized by arclength is without loss of generality, since the condition (\ref{c}) is linear in $\hat{p}', \hat{q}'$.
 The \textit{local compatibility condition} is then given by 
\beqs
\lefteqn{\hat{p}'(s) \Big( \bfQ^a_{\hat{p}(s)\psi_a + \hat{q}(s) \beta_a}( \psi_a \bfW^a \bfr_a + m_1^a \tau_a \bfe_a)
 -              \bfQ^b_{\hat{p}(s)\psi_b + \hat{q}(s) \beta_b}( \psi_b \bfW^b \bfr_b + m_1^b \tau_b \bfe_b) \Big)} \nonumber \\
 & &   =
-\hat{q}'(s) \Big( \bfQ^a_{\hat{p}(s)\psi_a + \hat{q}(s) \beta_a}( \beta_a \bfW^a \bfr_a + m_2^a \tau_a \bfe_a)
 -              \bfQ^b_{\hat{p}(s)\psi_b + \hat{q}(s) \beta_b}( \beta_b \bfW^b \bfr_b + m_2^b \tau_b \bfe_b) \Big),
 \label{ab1}
\eeqs
for $s \in (s_1, s_2)$ and with the hypotheses on the parameters given above.  We obtained this formula simply by substituting (\ref{pos2}) into (\ref{c}). 

Below, we identify necessary and sufficient conditions on structural parameters 
$\{ \psi_{a,b},\beta_{a,b},\tau_{a,b}\in \mathbb{R}; \bfr_{a,b}, \bfz_{a,b}, \bfe_{a,b} \in \mathbb R^3, |\bfe_{a,b}| = 1 ; m_1^{a,b}, m_2^{a,b} \in \mathbb{Z} \}$ satisfying the restrictions given above that allow for the existence of such curves.  Note that we allow the full set of parameters to vary, so that the two helical  phases can lie on different cylinders of arbitrary positive radius, with arbitrary orientation, and lying on arbitrary axes.  Further, the structural parameters defining pitch and other characteristics are subject to only very mild restrictions.   

We also note that we have not specified that the two mappings $\bfy_a$ and $\bfy_b$ map points on opposite sides of the interface on the reference domain to opposite sides of the interface on the cylinder.  That is, in a sufficiently small neighborhood $\calD = \calD^+ \cup \calD^- \subset \mathbb R \times (0, q^{\star})$ 
of a point $(\hat{p}(s_0), \hat{q}(s_0))$  divided by the interface into disjoint  regions $\calD^+$ and $\calD^-$, we may have the situation that 
compatible deformations $\bfy_a(\calD^-)$
and $\bfy_b(\calD^+)$ map to overlapping regions on the cylinder.  In this case we consider helical structures  given by
$\bfy_a(\calD^-) \cup \bfy_b(\calD^-)$ (or $\bfy_a(\calD^+) \cup \bfy_b(\calD^+)$).  This is consistent with the idea that 
nanotubes are not typically synthesized by deformations of a flat sheet of atoms, i.e, the functions (\ref{pos2}) 
do not represent actual  deformations, but just parameterizations.  The condition of compatibility is still reasonable in
this case in that a point $(p,q) \in \mathbb Z^2 \cap \calD^-$ near the interface is mapped by $\bfy_a$ and $\bfy_b$ to a nearby point
on the cylinder by compatibility, and so one can set up a 1-1 correspondence of nearby points of the two phases.

However, it is useful for the comparison with the ``rolling-up construction'' \cite{dumitrica_objective_2007} (disallowing folding), and for our analysis of 
slips and twins in Section \ref{sec8}, to distinguish the two cases.  Therefore we will say that the parameterization of a 
compatible interface is {\it orientable} if 
\beq
(\bfy_a,_p \times\, \bfy_a,_q)  \cdot  (\bfy_b,_p \times\, \bfy_b,_q )>0 \ \ {\rm on} \ (\hat{p}(s), \hat{q}(s)),  \ \ s_1 < s < s_2.  \label{orient}
\eeq 
 We introduce the vectors
\begin{align}\label{eq:fabgab}
 \mathbf{f}_{a,b} =  \left(\begin{array}{c} \psi_{a,b} \\ \beta_{a,b} \end{array}\right), \quad \mathbf{g}_{a,b} = \tau_{a,b} \left(\begin{array}{c} m_1^{a,b} \\ m_2^{a,b} \end{array}\right)
\end{align}
in order to consolidate the parameters.  In terms of $\bff_{a,b}\in [-\pi/2, \pi/2] \times [-\pi/2, \pi/2] $, $\bfg_{a,b}\in \R^2 $ and  $\mathbf{x}(s) = (\hat{p}(s), \hat{q}(s))$ the formulas (\ref{pos2}) give an explicit form of the condition (\ref{orient}) for an orientable interface:
\beq
(\bfy_a,_p \times\, \bfy_a,_q)  \cdot  (\bfy_b,_p \times\, \bfy_b,_q ) (s)= (\bff_a \cdot \bfg_a^{\perp}) (\bff_b \cdot \bfg_b^{\perp}) \big( \bfr_a  \cdot \bfQ_{(\bff_b - \bff_a) \cdot \bfx(s)} \bfr_b \big) >0,
\eeq
where we use the standard notation that $\perp$ denotes a counterclockwise rotation about $\bfe$ by $\pi/2$.

\vspace{2mm}
\noindent {\bf Simplification of the local compatibility condition.}  The compatibility condition (\ref{ab1}) can be simplified.  To do this, we first  consolidate some of the notation.  

As just above, we set $\mathbf{x}(s) = (\hat{p}(s), \hat{q}(s))$ for the arc-length parameterized curves. Thus, we have $|\mathbf{x}'(s)|=1$ for all $s \in (s_1,s_2)$, and so we define 
\begin{align}\label{eq:tangent}
\mathbf{t}(s) = \mathbf{x}'(s), \quad \mathbf{n}(s)  =  \bft(s)^{\perp} = \Big(\begin{array}{c}-\hat{q}'(s) \\  \hat{p}'(s)\end{array}\Big)
\end{align}
The tangent $\mathbf{t}(s)$ to the interface and normal $\mathbf{n}(s)$ forms an orthonormal basis at each point $\mathbf{x}(s)$, and
\begin{align}\label{eq:curvature}
\mathbf{t}'(s) = \kappa(s) \mathbf{n}(s),  \quad s_1 < s < s_2, 
\end{align}
where $\kappa(s) \in \mathbb{R}$ gives the \textit{curvature} of the interface at each $\mathbf{x}(s)$.  
The nondegeneracy conditions $m_2^{a,b}\psi_{a,b} \neq m_1^{a,b} \beta_{a,b}$ we have assumed above on group parameters are:
\beq
\bff_{a,b} \cdot \bfg_{a,b}^{\perp} \ne 0, \quad \bfg_{a,b} \cdot \bfh = 0  \ \ {\rm for\ some}  \ \ \bfh \in \Z^2\setminus \{ \mathbf{0}\},  \label{nonde}
\eeq
where $\bfg_{a,b}^{\perp} = \tau_{a,b} (-m^{a,b}_2, m^{a,b}_1)$.  We take these conditions to hold throughout.

The local compatibility equation can now be written as a total derivative,  
\begin{align}\label{eq:compat}
\Big( \mathbf{Q}^{a}_{\mathbf{x}(s) \cdot \mathbf{f}_a}\mathbf{r}_a +  (\mathbf{x}(s) \cdot \mathbf{g}_a) \mathbf{e}_a  \Big)' =  
\Big(\mathbf{Q}^{b}_{\mathbf{x}(s) \cdot \mathbf{f}_b} \mathbf{r}_b  + (\mathbf{x}(s) \cdot \mathbf{g}_b)\mathbf{e}_b \Big)' 
\end{align}
for $s \in (s_1, s_2)$.
\vspace{2mm}

We can catalogue all ways of  obtaining a locally compatible interface based on properties of the parameterized interface $\mathbf{x}(s)$ (see (\ref{eq:tangent}) and (\ref{eq:curvature})).  First note that either $\mathbf{t}(s) = \mathbf{t} = const.$ on $(s_1, s_2)$ or there is a point $s^* \in (s_1, s_2)$
where the curvature is nonzero.  In the latter case, since we are solving the local problem, we shrink the interval $(s_1, s_2),\  s_1 < s^* < s_2$ so that $\kappa(s) \ne 0$ on $(s_1, s_2)$.
Thus, for the {\it locally compatible interface} there are two cases to consider:
\begin{enumerate}
\item $\mathbf{t}(s) = \mathbf{t} = const.$ for all $s \in (s_1,s_2)$.
\item The curvature $\kappa(s)  \neq 0$ for all $s \in ({s}_1, {s}_2)$. 
\end{enumerate}

In the remainder of this section, we develop a complete characterization of local compatibility.  We state this characterization in the form of theorems for each of the two cases above.  
These results are then summarized and discussed in the next section.

\subsection{Implications of vanishing interface curvature}
\begin{lemma}\label{curveLemma}
Assume the hypotheses on the helical groups $G_a \neq G_b$ and on the interface.  The curvature $\kappa(s) = 0$ for all $s \in (s_1, s_2)$ if and only if 
the cylinders are parallel, $\mathbf{e}_a \times \mathbf{e}_b = 0$.  
\end{lemma}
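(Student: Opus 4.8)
The plan is to reduce the claim to the elementary geometry of the two rotating terms in the differentiated compatibility equation. First I would rewrite (\ref{eq:compat}) in differentiated form: using $\tfrac{d}{d\xi}\bfQ^{a}_{\xi}=\bfQ^{a}_{\xi}\bfW^{a}$ and $\bft(s)=\bfx'(s)$ (and likewise for $b$), condition (\ref{eq:compat}) becomes
\beq
(\bft\cdot\bff_a)\,\bfQ^{a}_{\bfx\cdot\bff_a}\bfW^{a}\bfr_a+(\bft\cdot\bfg_a)\,\bfe_a
=(\bft\cdot\bff_b)\,\bfQ^{b}_{\bfx\cdot\bff_b}\bfW^{b}\bfr_b+(\bft\cdot\bfg_b)\,\bfe_b
\qquad (s_1<s<s_2),
\eeq
and I would also record that integrating (\ref{eq:compat}) gives $\bfy_a(\bfx(s))-\bfy_b(\bfx(s))=\bfc$, a constant vector on the interface. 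The two facts I expect to use throughout are: (a) $\bfW^{a,b}\bfr_{a,b}$ has length $r_{a,b}=|\bfr_{a,b}|>0$ and lies in the plane orthogonal to $\bfe_{a,b}$, which $\bfQ^{a,b}_{(\cdot)}$ preserves; and (b) the nondegeneracy hypotheses (\ref{nonde}) make $\{\bff_{a,b},\bfg_{a,b}\}$ a basis of $\R^2$, so in particular $\bff_{a,b}\neq\mathbf 0$ and $\bfg_{a,b}\neq\mathbf 0$.

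For the direction $\kappa\equiv 0\Rightarrow\bfe_a\times\bfe_b=0$: here $\bft$ is constant and $\bfx(s)=\bfx_0+s\bft$, so $\bfx(s)\cdot\bff_a$ is affine in $s$. I would argue that if $\bft\cdot\bff_a\neq 0$ the left-hand side above sweeps, as $s$ runs over $(s_1,s_2)$, a non-degenerate arc of a circle of positive radius $|\bft\cdot\bff_a|\,r_a$ lying in an affine plane orthogonal to $\bfe_a$ (the $\bfe_a$-term only shifts that plane along $\bfe_a$), and similarly for the right-hand side with $b$. Since the two sides coincide, the only possibilities are: both $\bft\cdot\bff_{a,b}\neq 0$, in which case two arcs of positive length agree and three non-collinear common points fix a common affine plane, forcing $\bfe_a\parallel\bfe_b$; exactly one of them zero, which is impossible (a non-constant function cannot equal a constant one); or $\bft\cdot\bff_a=\bft\cdot\bff_b=0$, where the equation collapses to $(\bft\cdot\bfg_a)\bfe_a=(\bft\cdot\bfg_b)\bfe_b$ with $\bft\cdot\bfg_a\neq 0$ (else $\bft$ would be orthogonal to the basis $\{\bff_a,\bfg_a\}$), again giving $\bfe_a\parallel\bfe_b$.

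For the converse, assume $\bfe_b=\sigma\bfe_a$, $\sigma=\pm1$, and write $\bfe=\bfe_a$. I would first dot the differentiated equation with $\bfe$: the two in-plane rotating terms die, leaving $\bft(s)\cdot(\bfg_a-\sigma\bfg_b)=0$ for all $s$. If $\bfg_a\neq\sigma\bfg_b$ this pins $\bft$ to a fixed direction, so $\kappa\equiv 0$. If $\bfg_a=\sigma\bfg_b$, I would instead project the integrated relation onto the plane orthogonal to $\bfe$, obtaining $\bfQ^{a}_{\bfx\cdot\bff_a}\bfr_a-\bfQ^{b}_{\bfx\cdot\bff_b}\bfr_b=\text{const}$, hence (taking the squared norm) $\bfQ^{a}_{\bfx\cdot\bff_a}\bfr_a\cdot\bfQ^{b}_{\bfx\cdot\bff_b}\bfr_b=\text{const}$; expressing both rotations in a common orientation via $\bfW^b=\sigma\bfW^a$ on that plane, this forces $\bfx(s)\cdot(\bff_a-\sigma\bff_b)=\text{const}$, and if $\bff_a\neq\sigma\bff_b$ the interface lies on a line, so $\kappa\equiv 0$.

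The one case left over, $\bfe_b=\sigma\bfe$ with $\bfg_a=\sigma\bfg_b$ and $\bff_a=\sigma\bff_b$, is where I expect the real work to be. Here $\bfQ^{b}_{\bfx\cdot\bff_b}=\bfQ^{a}_{\bfx\cdot\bff_a}$ and $(\bfx\cdot\bfg_b)\bfe_b=(\bfx\cdot\bfg_a)\bfe$ on the plane, so $\bfy_a(\bfx)-\bfy_b(\bfx)=\bfQ^{a}_{\bfx\cdot\bff_a}(\bfr_a-\bfr_b)+(\bfz_a-\bfz_b)$; constancy of this along the interface leaves only $\bfx(s)\cdot\bff_a\equiv\text{const}$ (hence the interface straight and $\kappa\equiv 0$, since $\bff_a\neq\mathbf 0$) or $\bfr_a=\bfr_b$. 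In the latter subcase $\bfy_b\equiv\bfy_a+(\bfz_b-\bfz_a)$ --- the two structures are rigid translates --- and here one must invoke $G_a\neq G_b$: an axial part of $\bfz_b-\bfz_a$ is absorbed by the gauge $\bfz\mapsto\bfz+c\bfe$ and yields $G_a=G_b$, so the only genuinely admissible outcome is $\kappa\equiv 0$. Pinning down this last, all-parameters-aligned case --- i.e.\ exactly which configurations $G_a\neq G_b$ together with the nondegeneracy conditions actually permit --- is, I anticipate, the main obstacle; every generic subcase falls straight out of the circle-versus-plane geometry of the rotating terms.
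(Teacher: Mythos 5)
Your proposal is essentially correct and, in both directions, reaches the paper's conclusion by routes that differ in packaging more than in substance. For the implication $\kappa\equiv 0\Rightarrow\bfe_a\times\bfe_b=0$, the paper differentiates (\ref{eq:compat}) repeatedly to obtain (\ref{eq:2ndLemma}) and then dots with $\bfe_b$ and $\bfe_a$ to force $\bft\cdot\bff_a=\bft\cdot\bff_b=\bft\cdot\bfg_a=\bft\cdot\bfg_b=0$, contradicting the nondegeneracy condition (\ref{nonde}); your observation that each side of the once-differentiated equation traces a circular arc of radius $|\bft\cdot\bff_{a,b}|\,|\bfr_{a,b}|$ in an affine plane normal to $\bfe_{a,b}$, so that a common nondegenerate arc forces the two planes (hence the two axes) to coincide, is a more geometric encoding of the same facts, and your three subcases (both, exactly one, or neither of $\bft\cdot\bff_{a,b}$ vanishing) are handled cleanly. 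For the converse, both you and the paper first dot with $\bfe$ to obtain $\bft\cdot(\bfg_a-\sigma\bfg_b)=0$ and dispose of the case $\bfg_a\neq\sigma\bfg_b$; where the paper then splits on whether $\bfQ_{\bfx\cdot\bff_a}\bfW\bfr_a$ and $\bfQ_{\bfx\cdot\bff_b}\bfW\bfr_b$ are parallel or linearly independent, you instead take the squared norm of the projected integrated relation to force $\bfx(s)\cdot(\bff_a-\sigma\bff_b)=\mathrm{const}$, which is a slicker way to conclude that either the interface is straight or $\bff_a=\sigma\bff_b$. Both devices work.

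The one step you have not closed is exactly the one you flag: the all-aligned subcase $\bfe_b=\sigma\bfe_a$, $\bff_b=\sigma\bff_a$, $\bfg_b=\sigma\bfg_a$, $\bfr_a=\bfr_b$. There $\bfy_a-\bfy_b=\bfz_a-\bfz_b$ is constant along \emph{every} curve, so (\ref{eq:compat}) is vacuous and no curvature restriction can be extracted. Your gauge argument absorbs only the axial component of $\bfz_b-\bfz_a$; a nonzero transverse component still yields $G_a\neq G_b$ (parallel but distinct screw axes) with compatibility trivially satisfied, so the statement would fail there if that configuration were admitted. You should know that the paper's own proof elides the identical point: in its case (a) it deduces $\bff_a=\bff_b$, $\bfg_a=\bfg_b$, $\bfr_a=\bfr_b$ and declares this to ``violate the hypothesis $G_a\neq G_b$'' without discussing $\bfz$. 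The intended reading is evidently that two rigid translates of one and the same structure do not count as distinct phases, and with that reading your argument is complete; you have correctly isolated the only genuinely delicate configuration rather than missed an idea.
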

\begin{proof}
($\Leftarrow$) Without loss of generality, we can assume $\mathbf{e}_a = \mathbf{e}_b = \mathbf{e}$ since the case $\mathbf{e}_a = - \mathbf{e}_b$ has an identical structure (after replacing $\mathbf{f}_a$ with $-\mathbf{f}_a$ and $\mathbf{g}_a$ with $-\mathbf{g}_a$).  After explicit differentiation and some rearranging of terms,  (\ref{eq:compat}) becomes
\begin{align}\label{eq:compatLemma}
(\mathbf{t}(s) \cdot \mathbf{f}_a) \mathbf{Q}_{\mathbf{x}(s) \cdot \mathbf{f}_a} \mathbf{W}\mathbf{r}_a  - (\mathbf{t}(s) \cdot \mathbf{f}_b) 
\mathbf{Q}_{\mathbf{x}(s) \cdot \mathbf{f}_b} \mathbf{W}\mathbf{r}_b = \Big((\mathbf{g}_b - \mathbf{g}_a) \cdot \mathbf{t}(s)\Big) \mathbf{e}
\end{align}
for $s \in (s_1, s_2)$.  Here, $\mathbf{Q}_{(\cdot)} = \mathbf{Q}^{a}_{(\cdot)} = \mathbf{Q}^b_{(\cdot)}
$ and $\mathbf{W} = \mathbf{W}^{a} = \mathbf{W}^b$ since $\mathbf{e}_{a,b} = \mathbf{e}$  (recall (\ref{getQ})).  Both $\mathbf{Q}_{\mathbf{x}(s) \cdot \mathbf{f}_b} \mathbf{W}\mathbf{r}_b$ and  $\mathbf{Q}_{\mathbf{x}(s) \cdot \mathbf{f}_a} \mathbf{W}\mathbf{r}_a$ are perpendicular to $\mathbf{e} = \mathbf{e}_{a,b}$.  Thus, equality holds if and only if both sides vanish.  For the right-hand side, $(\mathbf{g}_b - \mathbf{g}_a) \cdot \mathbf{t}(s) = 0$.  By differentiating this identity, we have that $\kappa(s) (\mathbf{g}_b - \mathbf{g}_a) \cdot \mathbf{n}(s) = 0$.  Since $\mathbf{n}(s)$ is perpendicular to $\mathbf{t}(s)$, we conclude 
$\kappa(s) = 0$ for all $s \in (s_1, s_2)$, or  $\mathbf{g}_a = \mathbf{g}_b$.  If the latter condition \textit{does not} hold, then the proof is complete.  Hence, we assume $\mathbf{g}_a = \mathbf{g}_b$.  

Substituting $\mathbf{g}_a = \mathbf{g}_b$ back into (\ref{eq:compatLemma}), the left-hand side vanishes.  
There are then two distinct cases to consider:
\begin{enumerate}
\item[(a)]  The vectors $\mathbf{Q}_{\mathbf{x}(s) \cdot \mathbf{f}_b} \mathbf{W}\mathbf{r}_b$ and $\mathbf{Q}_{\mathbf{x}(s) \cdot \mathbf{f}_a} \mathbf{W}\mathbf{r}_a$  are parallel for all $s \in (s_1, s_2)$.  
\item[(b)]  These vectors are linearly independent on some interval\footnote{If they are linearly independent at a single point, then, by continuity, they must be linearly independent on some interval.} $(\tilde{s}_1, \tilde{s}_2) \subset (s_1, s_2)$.
\end{enumerate}

We suppose (a). Then, we can write   $\mathbf{Q}_{\mathbf{x}(s) \cdot \mathbf{f}_a} \mathbf{W}\mathbf{r}_a = \pm  (|\mathbf{r}_a|/|\mathbf{r}_b|) \mathbf{Q}_{\mathbf{x}(s) \cdot \mathbf{f}_b} \mathbf{W}\mathbf{r}_b$, where the $\pm$ is fixed for all $s$ (given the smoothness hypothesis).  This implies $\mathbf{W} \mathbf{r}_a = \pm  (|\mathbf{r}_a|/|\mathbf{r}_b|) \mathbf{Q}_{\mathbf{x}(s) \cdot (\mathbf{f}_b - \mathbf{f}_a)} \mathbf{W} \mathbf{r}_b$ since $\mathbf{e}_{a,b} = \mathbf{e}$.  By differentiating this quantity, we deduce  the identity $\mathbf{t}(s) \cdot (\mathbf{f}_b - \mathbf{f}_a) = 0$.  By differentiating this, we deduce that either $\kappa(s) = 0$ on for all $s$ or $\mathbf{f}_b = \mathbf{f}_a$.  We assume the latter, as the former is desired.  In substituting this back into (\ref{eq:compatLemma}) (using that $\mathbf{g}_a = \mathbf{g}_b$ and $\mathbf{f}_a = \mathbf{f}_b \neq 0$), we conclude $\mathbf{r}_a = \mathbf{r}_b$ since $\mathbf{Q}_{\mathbf{x}(s) \cdot (\mathbf{f}_b - \mathbf{f}_a)} = \mathbf{I}$.   But then, $\mathbf{g}_a = \mathbf{g}_b$, $\mathbf{f}_a = \mathbf{f}_b$ and $\mathbf{r}_a = \mathbf{r}_b$, violating the hypothesis $G_a \ne G_b$. 

We suppose (b).  Then, $\mathbf{Q}_{\mathbf{x}(s) \cdot \mathbf{f}_b} \mathbf{W}\mathbf{r}_b$ and $\mathbf{Q}_{\mathbf{x}(s) \cdot \mathbf{f}_a}\mathbf{W}\mathbf{r}_a$ are linearly independent for all $s \in (\tilde{s}_1, \tilde{s}_2)$.  Consequently, the left-hand side of (\ref{eq:compatLemma}) vanishes on this interval if and only if $\mathbf{t}(s) \cdot \mathbf{f}_b = \mathbf{t}(s) \cdot \mathbf{f}_a = 0$ for all $s \in (\tilde{s}_1, \tilde{s}_2)$.  By differentiating these identities (as we did above), we conclude that either $\kappa(s) = 0$ for $s \in (\tilde{s}_1, \tilde{s}_2)$ or $\mathbf{f}_a = \mathbf{f}_b = 0$.  The latter contradicts the hypotheses on the group parameters stated at the start of this section.  So the former must be true in this case.  

In summary, we have shown that $(\mathbf{e}_a \times \mathbf{e}_b) =0$ implies $\kappa(s) = 0$  for all $s_1< s< s_2$, in both cases (a) and (b). 

($\Rightarrow$)  Suppose, for the sake of a contradiction, that $\kappa(s) = 0$ for all $s \in (s_1, s_2)$ but $(\mathbf{e}_a \times \mathbf{e}_b) \neq 0$.  By explicitly differentiating the compatibility condition in (\ref{eq:compat}), we obtain
\begin{align}\label{eq:1stLemma}
(\mathbf{t} \cdot \mathbf{f}_b) \mathbf{Q}^b_{\mathbf{x}(s) \cdot \mathbf{f}_b} \mathbf{W}^b \mathbf{r}_b - (\mathbf{t} \cdot \mathbf{f}_a) \mathbf{Q}^a_{\mathbf{x}(s) \cdot \mathbf{f}_a}\mathbf{W}^a \mathbf{r}_a = (\mathbf{g}_a \cdot \mathbf{t}) \mathbf{e}_a -  (\mathbf{g}_b \cdot \mathbf{t}) \mathbf{e}_b
\end{align}
for all $s \in (s_1, s_2)$.  Notice that the tangent $\mathbf{t}(s) = \mathbf{t} = const$ in this case.  Thus, by differentiating twice more, we obtain two additional equations
\begin{equation}
\begin{aligned}\label{eq:2ndLemma}
&(\mathbf{t} \cdot \mathbf{f}_b)^2 \mathbf{Q}^b_{\mathbf{x}(s) \cdot \mathbf{f}_b} \mathbf{r}_b = (\mathbf{t} \cdot \mathbf{f}_a)^2 \mathbf{Q}^a_{\mathbf{x}(s) \cdot \mathbf{f}_a} \mathbf{r}_a, \\
&(\mathbf{t} \cdot \mathbf{f}_b)^3 \mathbf{Q}^b_{\mathbf{x}(s) \cdot \mathbf{f}_b}\mathbf{W}^b \mathbf{r}_b = (\mathbf{t} \cdot \mathbf{f}_a)^3 \mathbf{Q}^a_{\mathbf{x}(s) \cdot \mathbf{f}_a} \mathbf{W}^a \mathbf{r}_a,
\end{aligned}
\end{equation}
which must hold for all $s \in (s_1, s_2)$.  We dot both of these equations with $\mathbf{e}_b$ so that the left-hand sides vanish.  Then, we must have $\mathbf{t} \cdot \mathbf{f}_a  = 0$ or $\mathbf{e}_b \cdot \mathbf{Q}^a_{\mathbf{x}(s) \cdot \mathbf{f}_a} \mathbf{W}^a \mathbf{r}_a = \mathbf{e}_b \cdot \mathbf{Q}^a_{\mathbf{x}(s) \cdot \mathbf{f}_a} \mathbf{r}_a = 0$.  However, the latter implies that $\mathbf{e}_b$ is parallel to $\mathbf{e}_a$ since the set $\{  \mathbf{Q}^a_{\mathbf{x}(s) \cdot \mathbf{f}_a} \mathbf{r}_a,  \mathbf{Q}^a_{\mathbf{x}(s) \cdot \mathbf{f}_a}\mathbf{W}^a \mathbf{r}_a, \mathbf{e}_a\}$ forms an orthogonal basis of $\mathbb{R}^3$.  But $\mathbf{e}_a \times \mathbf{e}_b \neq 0$ by hypothesis.  So we conclude $\mathbf{f}_a \cdot \mathbf{t} = 0$.  Now, we instead dot the equations in (\ref{eq:2ndLemma}) with $\mathbf{e}_a$ so that the right-hand sides vanish.  By a similar argument, we conclude $\mathbf{f}_b \cdot \mathbf{t} = 0$.   Substituting $\mathbf{f}_{a,b} \cdot \mathbf{t} = 0$ back into (\ref{eq:1stLemma}), we see that the left-hand side vanishes.  It, therefore, follows that $\mathbf{g}_a \cdot \mathbf{t} = \mathbf{g}_b \cdot \mathbf{t} = 0$ since $\mathbf{e}_a \times \mathbf{e}_b \neq 0$.  In summary, we have deduced that $\mathbf{f}_{a,b}$ and $\mathbf{g}_{a,b}$ are all parallel for this case.  But this means that $\mathbf{f}_a \cdot \mathbf{g}_a^{\perp} = 0$, which violates the non-degeneracy condition (\ref{nonde}).  This is the desired contradiction, proving that, if $\kappa(s) = 0$ for all $s \in (s_1, s_2)$, then $(\mathbf{e}_a \times \mathbf{e}_b) =0$.
\end{proof}

\subsection{Classification of all local solutions}
\label{sect6.2}

Locally compatible interfaces with zero curvature or nonzero curvature can now be determined.  Recall
the notation (\ref{eq:fabgab})-(\ref{nonde}).

\subsubsection{Vertical, horizontal and helical interfaces }

\begin{theorem}[Interfaces with zero curvature]\label{VHHThrm} 
Assume the hypotheses on the helical groups $G_a \neq G_b$, and assume without loss of generality that $\bfe_a \cdot \bfe_b > 0$.  
Assume that the curvature $\kappa(s) = 0$ on $(s_1, s_2)$.   Each locally compatible interface is contained in one of the following cases:
\begin{enumerate}
\item[(i)] (Vertical interfaces).   $\mathbf{e}_a = \mathbf{e}_b$, $\mathbf{t} \cdot \mathbf{f}_a = \mathbf{t} \cdot \mathbf{f}_b = 0$, and  $\mathbf{t} \cdot \mathbf{g}_a =  \mathbf{t} \cdot \mathbf{g}_b \neq 0$; 
\item[(ii)] (Horizontal interfaces). $\mathbf{e}_a = \mathbf{e}_b$, $\mathbf{t} \cdot \mathbf{f}_a = \mathbf{t} \cdot \mathbf{f}_b \neq 0$, $\mathbf{g}_a \cdot \mathbf{t} = \mathbf{g}_b \cdot \mathbf{t} = 0$, and
\begin{align}\label{eq:rotTheorem}
\mathbf{r}_a = \mathbf{Q}^b_{ \mathbf{x}(s_1) \cdot (\mathbf{f}_b - \mathbf{f}_a)} \mathbf{r}_b;
\end{align}
\item[(iii)] (Helical interfaces). The same as the horizontal interface except that $\mathbf{g}_a \cdot \mathbf{t} = \mathbf{g}_b \cdot \mathbf{t} \neq 0$.  
\end{enumerate}
The interface  is given by  $\mathbf{x}(s) = (s-s_1) \mathbf{t} + \mathbf{c}$ for some $\mathbf{c} \in \mathbb{R}^2$ and $\mathbf{t} \in \mathbb{S}^1$.   
In all cases $\bfz_b$ and $\bfz_a$ are restricted  by matching the formulas (\ref{pos2}) at one point on the interface, e.g., $\bfx(t_1)$.
\end{theorem}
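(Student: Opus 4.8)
The plan is to funnel everything through a single differentiated identity on a straight interface and read off the three cases from a two-way split of that identity. First, Lemma~\ref{curveLemma} together with $\kappa\equiv 0$ gives $\mathbf{e}_a\times\mathbf{e}_b=0$, which with the standing normalization $\mathbf{e}_a\cdot\mathbf{e}_b>0$ forces $\mathbf{e}_a=\mathbf{e}_b=:\mathbf{e}$, so $\mathbf{Q}^a_{(\cdot)}=\mathbf{Q}^b_{(\cdot)}=:\mathbf{Q}_{(\cdot)}$ and $\mathbf{W}^a=\mathbf{W}^b=:\mathbf{W}$. Since $\mathbf{t}'=\kappa\mathbf{n}=0$ by \eqref{eq:curvature}, the tangent is a constant unit vector $\mathbf{t}$ and the interface is the line $\mathbf{x}(s)=(s-s_1)\mathbf{t}+\mathbf{c}$ with $\mathbf{c}=\mathbf{x}(s_1)$. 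Differentiating \eqref{eq:compat} gives exactly \eqref{eq:compatLemma} (as in the proof of Lemma~\ref{curveLemma}); its left-hand side lies in the plane $\mathbf{e}^{\perp}$ and its right-hand side is parallel to $\mathbf{e}$, so both sides vanish, yielding
\[
\mathbf{t}\cdot\mathbf{g}_a=\mathbf{t}\cdot\mathbf{g}_b,\qquad (\mathbf{t}\cdot\mathbf{f}_a)\,\mathbf{Q}_{\mathbf{x}(s)\cdot\mathbf{f}_a}\mathbf{W}\mathbf{r}_a=(\mathbf{t}\cdot\mathbf{f}_b)\,\mathbf{Q}_{\mathbf{x}(s)\cdot\mathbf{f}_b}\mathbf{W}\mathbf{r}_b
\]
for all $s\in(s_1,s_2)$.

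The decisive dichotomy is whether $\mathbf{t}\cdot\mathbf{f}_a=0$. If it is, the second relation forces $(\mathbf{t}\cdot\mathbf{f}_b)\mathbf{W}\mathbf{r}_b=0$, hence $\mathbf{t}\cdot\mathbf{f}_b=0$ since $\mathbf{r}_b\neq 0$; then $\mathbf{t}$ cannot also be orthogonal to $\mathbf{g}_a$, because $\mathbf{f}_a\cdot\mathbf{g}_a^{\perp}\neq 0$ from \eqref{nonde} makes $\{\mathbf{f}_a,\mathbf{g}_a\}$ a basis of $\R^2$ while $|\mathbf{t}|=1$; thus $\mathbf{t}\cdot\mathbf{g}_a=\mathbf{t}\cdot\mathbf{g}_b\neq 0$, which is case (i). If instead $\mathbf{t}\cdot\mathbf{f}_a\neq 0$, then also $\mathbf{t}\cdot\mathbf{f}_b\neq 0$ (otherwise the right side of the second relation vanishes while the left does not), and I rewrite it, using $\mathbf{Q}_\xi\mathbf{Q}_\eta=\mathbf{Q}_{\xi+\eta}$, as
\[
\mathbf{Q}_{\mathbf{x}(s)\cdot(\mathbf{f}_a-\mathbf{f}_b)}\big((\mathbf{t}\cdot\mathbf{f}_a)\mathbf{W}\mathbf{r}_a\big)=(\mathbf{t}\cdot\mathbf{f}_b)\mathbf{W}\mathbf{r}_b .
\]
The right side is $s$-independent while the left rotates the fixed nonzero vector $(\mathbf{t}\cdot\mathbf{f}_a)\mathbf{W}\mathbf{r}_a\in\mathbf{e}^{\perp}$ through the angle $\mathbf{x}(s)\cdot(\mathbf{f}_a-\mathbf{f}_b)=(s-s_1)\,\mathbf{t}\cdot(\mathbf{f}_a-\mathbf{f}_b)+\mathbf{c}\cdot(\mathbf{f}_a-\mathbf{f}_b)$; this is constant in $s$ only if $\mathbf{t}\cdot(\mathbf{f}_a-\mathbf{f}_b)=0$, so $\mathbf{t}\cdot\mathbf{f}_a=\mathbf{t}\cdot\mathbf{f}_b\neq 0$. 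Evaluating the relation at $s=s_1$ and commuting $\mathbf{W}$ past the rotation about $\mathbf{e}$ then gives $\mathbf{r}_a=\mathbf{Q}^b_{\mathbf{x}(s_1)\cdot(\mathbf{f}_b-\mathbf{f}_a)}\mathbf{r}_b$, which is \eqref{eq:rotTheorem}; and the sub-split on whether the common value $\mathbf{t}\cdot\mathbf{g}_a=\mathbf{t}\cdot\mathbf{g}_b$ vanishes produces cases (ii) and (iii).

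For the converse and the closing assertion, I check that in each case \eqref{eq:compatLemma} holds identically: in (i) both sides vanish term by term, while in (ii)--(iii) the angle $\mathbf{x}(s)\cdot(\mathbf{f}_b-\mathbf{f}_a)$ is constant along the line (since $\mathbf{t}\cdot\mathbf{f}_a=\mathbf{t}\cdot\mathbf{f}_b$), so \eqref{eq:rotTheorem} at $s_1$ propagates to every $s$ and the $\mathbf{W}\mathbf{r}$ terms cancel. Hence the two sides of \eqref{pos2} differ by an $s$-independent vector along the line, and imposing equality at the single point $\mathbf{x}(s_1)$ fixes that vector to zero; this is precisely the stated restriction tying $\mathbf{z}_a$ to $\mathbf{z}_b$, and the line $\mathbf{x}(s)=(s-s_1)\mathbf{t}+\mathbf{c}$ is compatible. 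I expect the main obstacle to be the branch $\mathbf{t}\cdot\mathbf{f}_a\neq 0$: one must argue cleanly that an identity $\mathbf{Q}_{\mu s+c}\mathbf{v}\equiv\mathbf{w}$ with $\mathbf{w}$ constant and $0\neq\mathbf{v}\in\mathbf{e}^{\perp}$ forces $\mu=0$, and then keep careful track of which angle and base point appear so the final condition emerges in the clean form \eqref{eq:rotTheorem} rather than an equivalent messier one; the remaining degeneracies ($\mathbf{t}=0$, or $\mathbf{f}_a=\mathbf{f}_b$ forcing also $\mathbf{g}_a=\mathbf{g}_b$ and $\mathbf{r}_a=\mathbf{r}_b$) are dispatched with \eqref{nonde} and the hypothesis $G_a\neq G_b$, just as in Lemma~\ref{curveLemma}.
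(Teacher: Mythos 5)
Your proposal is correct and follows essentially the same route as the paper's proof: reduce to a common axis via Lemma \ref{curveLemma}, differentiate (\ref{eq:compat}) along the straight interface, split the resulting identity into its $\mathbf{e}$-component (giving $\mathbf{t}\cdot\mathbf{g}_a=\mathbf{t}\cdot\mathbf{g}_b$) and its component perpendicular to $\mathbf{e}$, and then dichotomize on whether $\mathbf{t}\cdot\mathbf{f}$ vanishes, with the nondegeneracy condition (\ref{nonde}) ruling out the fully orthogonal case. The only cosmetic difference is that the paper obtains the dichotomy by differentiating once more to reach (\ref{eq:cc1}), whereas you argue directly that a rotation through a linearly varying angle of a fixed nonzero vector cannot be constant; these are the same argument.
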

\begin{proof}  Note that (\ref{nonde}) implies that $\bff_{a,b} \ne 0$.  Since $\bft(s) = \bft = const$, we have $\mathbf{e}_a \times \mathbf{e}_b \neq 0$ by Lemma \ref{curveLemma} and so $\mathbf{e}_a = \mathbf{e}_b = \mathbf{e}$.  Thus, $\mathbf{Q}^{a}_{(\cdot)} = \mathbf{Q}^b_{(\cdot)} = 
\mathbf{Q}_{(\cdot)}$ and $\mathbf{W}^a = \mathbf{W}^b = \mathbf{W}$.  Hence, by explicitly differentiating the compatibility equation  (\ref{eq:compat}) and pre-multiplying this equation by $\mathbf{Q}_{-\mathbf{x}(s) \cdot \mathbf{f}_a}$, we obtain a condition equivalent to local compatibility in this case: 
\begin{align}\label{eq:compatConst}
(\mathbf{t} \cdot \mathbf{f}_b) \mathbf{Q}_{\mathbf{x}(s) \cdot (\mathbf{f}_b - \mathbf{f}_a)} \mathbf{W}\mathbf{r}_b - (\mathbf{t} \cdot \mathbf{f}_a)\mathbf{W} \mathbf{r}_a = (\mathbf{g}_a \cdot \mathbf{t} -  \mathbf{g}_b \cdot \mathbf{t}) \mathbf{e}
\end{align}
for $s \in (s_1, s_2)$.  Here, $\mathbf{W} \mathbf{r}_a$ and $ \mathbf{Q}_{\mathbf{x}(s) \cdot (\mathbf{f}_b - \mathbf{f}_a)} \mathbf{W}\mathbf{r}_b$ are both perpendicular to $\mathbf{e}$.  Thus, by dotting this quantity with $\mathbf{e}$, we see that $\mathbf{g}_a \cdot \mathbf{t}$ must equal $\mathbf{g}_b \cdot \mathbf{t}$. This condition is necessary in all cases as stated in the theorem.  

Substituting this back into (\ref{eq:compatConst}), the right-hand side vanishes.  Then, by differentiating this equation, we obtain the necessary condition
\begin{equation}
\begin{aligned}\label{eq:cc1}
&(\mathbf{t} \cdot \mathbf{f}_b) (\mathbf{t} \cdot  (\mathbf{f}_b - \mathbf{f}_a)) \mathbf{Q}_{\mathbf{x}(s) \cdot (\mathbf{f}_b - \mathbf{f}_a)} \mathbf{r}_b = 0
\end{aligned}
\end{equation}
for $s \in (s_1, s_2)$.  By assumption  $\mathbf{r}_b$ cannot be zero, so $\mathbf{Q}_{\mathbf{x}(s) \cdot (\mathbf{f}_b - \mathbf{f}_a)} \mathbf{r}_b$ is not zero.  Thus, there are only two possibilities for local compatibility in this case:
\begin{enumerate}
\item[(a)] Either, $\mathbf{t} \cdot \mathbf{f}_b = 0$;
\item[(b)] Or, $\mathbf{t} \cdot \mathbf{f}_b \ne 0$ and $\mathbf{t} \cdot  (\mathbf{f}_b - \mathbf{f}_a) = 0$.
\end{enumerate}

We suppose (a).  Then, in substituting $\mathbf{t} \cdot \mathbf{f}_b = 0$ back into the first in (\ref{eq:compatConst}), we have a locally compatible helical structure if and only if $(\mathbf{t} \cdot \mathbf{f}_a)\mathbf{W} \mathbf{r}_a=0$; that is, if and only if $\mathbf{t} \cdot \mathbf{f}_a = 0$.  In combining all the identities, we obtain Case (i) in the theorem.  The inequality $\mathbf{g}_{a,b} \cdot \mathbf{t} \neq 0$ 
follows from the nondegeneracy conditions (\ref{nonde}) assumed on the group parameters.  Thus,  hypothesis (a) gives Case (i) of the theorem.

Now, we suppose (b) above.   Note that since $\mathbf{x}'(s) = \mathbf{t} = const$, the curve  is given by $\mathbf{x}(s)  = (s -s_1) \mathbf{t} + \mathbf{c}$ for some $\mathbf{c} \in \mathbb{R}^2$ (as stated in the theorem).  Making use of this fact, we observe that $\mathbf{Q}_{\mathbf{x}(s) \cdot (\mathbf{f}_b - \mathbf{f}_a) } = \mathbf{Q}_{ \mathbf{c} \cdot (\mathbf{f}_b - \mathbf{f}_a) } = const.$ since $\mathbf{t} \cdot  (\mathbf{f}_b - \mathbf{f}_a) = 0$.    Substituting this into (\ref{eq:compatConst}), we have a locally compatible interface if and only if 
\begin{align} \label{hi}
(\mathbf{t} \cdot \mathbf{f}_b) \mathbf{Q}_{\mathbf{c} \cdot (\mathbf{f}_b - \mathbf{f}_a)}\mathbf{W}\mathbf{r}_b =  (\mathbf{t} \cdot \mathbf{f}_a)\mathbf{W} \mathbf{r}_a.
\end{align}
Noting that $\bfQ_{(\cdot)}$ and $\bfW$ commute, so we can remove $\bfW$ from (\ref{hi}), and we can cancel the nonzero terms $\mathbf t \cdot \mathbf f_b = \mathbf t \cdot \mathbf f_a$.
Thus, we have necessarily that $|\mathbf r_b| = |\mathbf r_a|$ and $\mathbf{Q}_{ \mathbf{x}(s_1) \cdot (\mathbf{f}_b - \mathbf{f}_a) } \mathbf r_b = \mathbf r_a$.  This condition is also
sufficient for (\ref{hi}).  Therefore, necessary and sufficient conditions for local compatibility under hypotheses  (b) are given in Cases (ii) and (iii) of the theorem.
\end{proof}

\subsubsection{Elliptic interfaces} \label{elliptical}
\begin{theorem}[Interfaces with nonzero curvature] \label{EllipticThrm}
Assume the hypotheses on the helical groups $G_a \neq G_b$, assume the hypothesis on the interface, and assume that the curvature $\kappa(s) \ne 0$ on $(s_1, s_2)$.  
Introduce an
orthonormal basis $\bfe_1 = \bfe_b \times \bfe_a/|\bfe_b \times \bfe_a|$, $\bfe_2 = (\bfe_b - \bfe_a)/|\bfe_b - \bfe_a|$ and $\bfe_3 = \bfe_1 \times \bfe_2$.  
Define $0 \le \theta_{a,b} < 2 \pi$ and $\rho_{a,b}>0$ by $\rho_a \bfQ^a_{\theta_a}\bfe_1 = {\bfr}_a $ and $\rho_b \bfQ^b_{\theta_b} \bfe_1 = {\bfr}_b$.  Note that  $\bfe_a = -\sin \xi \bfe_2 + \cos \xi \bfe_3, \ \bfe_b = \sin \xi \bfe_2 + \cos \xi \bfe_3$ for a suitable $0 < \xi < \pi$, $\xi \ne \pi/2$. 
Each locally compatible interface is contained in one of the following cases:
	\begin{enumerate}
		\item[(i)] (Type 1 Elliptic interfaces).   $\mathbf{f}_a = \mathbf{f}_b$,  $\mathbf{g}_a =  -\mathbf{g}_b$, $\rho_a = \rho_b$, $\theta_a = \theta_b$.  In non-arclength parameterization the interface is given by
\beq 
\tilde{\bfx}(t) =  t \bfu -  (\rho_a \tan \xi \sin (t + \theta_a)  + c )\bfv,  \label{i1}
\eeq
where $\bfu \cdot \bfg_b = \bfv  \cdot \bff_b = 0, \bfu \cdot \bff_b = \bfv \cdot \bfg_b =1$, $c \in \mathbb{R}$ and $t \in (t_1, t_2)$ for the interval chosen below. 
		\item[(ii)] (Type 2 Elliptic interfaces).  $\mathbf{f}_a = -\mathbf{f}_b$,  $\mathbf{g}_a =  \mathbf{g}_b$, $\rho_a = \rho_b$, $\theta_a = -\theta_b$.  In non-arclength parameterization the interface is given by
\beq
\tilde{\bfx}(t) =  t \bfu -  (\rho_a \cot \xi \sin (-t + \theta_a) + c  ) \bfv  \label{i2}
\eeq
with $\bfu, \bfv, c$ as above and $t \in (t_1, t_2)$ for the interval below.
	\end{enumerate}
\end{theorem}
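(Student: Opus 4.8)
The plan is to integrate (\ref{eq:compat}) once, project onto an adapted frame, use the fact that the interface lives in a $2$-plane, and reduce to a polynomial identity. First I would integrate (\ref{eq:compat}) over $(s_1,s_2)$, obtaining $\bfQ^{a}_{\bfx(s)\cdot\bff_a}\bfr_a+(\bfx(s)\cdot\bfg_a)\bfe_a=\bfQ^{b}_{\bfx(s)\cdot\bff_b}\bfr_b+(\bfx(s)\cdot\bfg_b)\bfe_b+\bfk$ for a constant $\bfk\in\R^3$; since $\bfz_a,\bfz_b$ are free, this $\bfk$ is exactly the freedom pinned by ``matching (\ref{pos2}) at one point'' as in Theorem~\ref{VHHThrm}. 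Because $\kappa\not\equiv 0$, Lemma~\ref{curveLemma} gives $\bfe_a\times\bfe_b\neq 0$, so the basis $\bfe_1,\bfe_2,\bfe_3$ and the parameters $\rho_{a,b},\theta_{a,b},\xi$ of the theorem are well defined, with $\bfr_{a,b}=\rho_{a,b}\bfQ^{a,b}_{\theta_{a,b}}\bfe_1\perp\bfe_{a,b}$. Setting $u(s):=\bfx(s)\cdot\bff_a+\theta_a$, $v(s):=\bfx(s)\cdot\bff_b+\theta_b$, $\gamma_a(s):=\bfx(s)\cdot\bfg_a$, $\gamma_b(s):=\bfx(s)\cdot\bfg_b$ and dotting the integrated identity with $\bfe_1$, $\bfe_a$, $\bfe_b$ in turn produces three scalar relations along the interface: a radial one, $\rho_a\cos u=\rho_b\cos v+c_1$, and two ``axial'' ones that are affine in $(\gamma_a,\gamma_b)$ with coefficients $1$ and $\cos2\xi$ (the $2\times 2$ system has determinant $-\sin^2 2\xi\neq 0$) and with $\rho_{a,b}\sin2\xi\,\sin(\cdot)$ forcing terms.

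Next I would exploit that the interface is a planar curve. Since $\kappa\neq 0$ the unit tangent $\bft$ is nonconstant, so $u,v$ are nonconstant (if $\bft\cdot\bff_a\equiv 0$ then $\bft$ is the constant unit vector orthogonal to $\bff_a\neq 0$, forcing $\kappa=0$); and by the nondegeneracy (\ref{nonde}) the pair $(u,\gamma_a)$ is an affine coordinate system on $\R^2$, so $v=pu+q\gamma_a+r$, $\gamma_b=p'u+q'\gamma_a+r'$ with real constants. Solving the $\bfe_b$-axial relation for $\gamma_a$: either $q'=\cos2\xi$, forcing $\rho_a\sin2\xi\sin u$ to be affine in $u$ — impossible since $u$ varies and $\rho_a\sin2\xi\neq 0$ — or $\gamma_a=\lambda\sin u+\mu u+\nu$ along the interface, with $\lambda=\rho_a\sin2\xi/(q'-\cos2\xi)\neq 0$.

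Then I would substitute $\gamma_a=\lambda\sin u+\mu u+\nu$ into $v=pu+q\gamma_a+r$ and into the radial relation, which (as $u$ ranges over a nondegenerate interval and the two sides are real-analytic in $u$) becomes the identity $\rho_a\cos u\equiv\rho_b\cos(\tilde p u+\tilde q\sin u+\tilde r)+c_1$ with $\tilde q=q\lambda$. Differentiating, squaring, and using $\sin^2(\tilde p u+\tilde q\sin u+\tilde r)=1-((\rho_a\cos u-c_1)/\rho_b)^2$ turns this into the polynomial identity $\rho_a^2(1-x^2)=(\rho_b^2-(\rho_a x-c_1)^2)(\tilde p+\tilde q x)^2$ in $x=\cos u$; the $x^4$-coefficient is $-\rho_a^2\tilde q^2$, so $\tilde q=0$, hence $q=0$ since $\lambda\neq 0$, and matching the remaining coefficients gives $\tilde p=\pm1$, $c_1=0$, $\rho_a=\rho_b$, and then $\tilde r\in 2\pi\Z$. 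Reading $q=0$ and $v=\pm u+\mathrm{const}$ back as identities of linear functionals yields $\bff_b=\pm\bff_a$ and $\theta_b\equiv\pm\theta_a\pmod{2\pi}$. Returning to the two axial relations with $q=0$ and eliminating $\gamma_b$ forces $\mu=0$, hence $p'=0$, hence $\bfg_b\parallel\bfg_a$; solving the resulting scalar equations gives $\bfg_b=-\bfg_a$ with $\lambda$ a nonzero multiple of $\rho_a\tan\xi$ in the branch $\bff_b=\bff_a,\ \theta_b=\theta_a$, versus $\bfg_b=\bfg_a$ with $\lambda$ a nonzero multiple of $\rho_a\cot\xi$ in the branch $\bff_b=-\bff_a,\ \theta_b=-\theta_a$. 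These are precisely the parameter conditions of Type~1 and Type~2; introducing the dual basis $\bfu,\bfv$ of $\bff_b,\bfg_b$ and writing $\bfx=(\bfx\cdot\bff_b)\bfu+(\bfx\cdot\bfg_b)\bfv$ with $\gamma_b=\mp(\lambda\sin(\cdot)+\nu)$ then reproduces the explicit curves (\ref{i1}) and (\ref{i2}) (nonzero curvature wherever $\sin(t+\theta_a)\neq 0$, which selects the admissible subinterval). For sufficiency I would plug each set of Type~1/Type~2 relations together with (\ref{i1})/(\ref{i2}) directly into (\ref{eq:compat}) and verify the identity, checking also that the parameters remain consistent with all standing hypotheses, including $G_a\neq G_b$.

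The main obstacle is organizing the case analysis so that the reduction to the polynomial identity in $x=\cos u$ really does apply in every subcase — in particular handling $q'=\cos2\xi$ and the a priori possibility that $u$ or $v$ be locally constant — together with the sign/orientation bookkeeping needed to land exactly on (\ref{i1})--(\ref{i2}): pinning down $\tan\xi$ versus $\cot\xi$, the constant $c$, and the orientation of $\bfe_1$ relative to the rotations $\bfQ^{a,b}_{(\cdot)}$.
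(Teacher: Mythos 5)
Your proposal is correct, and its core argument is genuinely different from the paper's. Both proofs start identically: integrate (\ref{eq:compat}) to a pointwise identity with an unknown constant, invoke Lemma \ref{curveLemma} to get $\bfe_a \times \bfe_b \neq 0$, and project onto an adapted orthonormal frame to obtain one ``radial'' equation ($\rho_a\cos u = \rho_b\cos v + c_1$) and two ``axial'' equations (the paper dots with $\bfe_2,\bfe_3$, you with $\bfe_a,\bfe_b$; these carry the same information since the $2\times2$ system is invertible for $\xi\neq\pi/2$). The divergence is in how the parameter restrictions are extracted. The paper first proves $\bff_a\parallel\bff_b$ by contradiction, expanding $\bft(s)$ in the reciprocal basis of $\bff_a,\bff_b$ and showing that a quadratic in $\eta_b(s)/\eta_a(s)$ must have vanishing coefficients because that ratio has nonvanishing derivative; it then obtains $\lambda=\pm1$, $\rho_a=\rho_b$, $\theta_b=\pm\theta_a$ and the $\bfg$ relations by repeatedly differentiating the scalar equations in $s$ and cancelling factors of $\bft\cdot\bff_a$. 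You instead use the affine coordinates $(u,\gamma_a)$ supplied unconditionally by the nondegeneracy condition (\ref{nonde}) on phase $a$, read the interface off one axial equation as the graph $\gamma_a=\lambda\sin u+\mu u+\nu$ (ruling out the degenerate slope $q'=\cos2\xi$ exactly as you indicate), and convert the radial equation into a polynomial identity in $x=\cos u$; degree counting then delivers $\tilde q=0$ (hence $\bff_b=\pm\bff_a$, with no separate non-parallel case to exclude), $\tilde p=\pm1$, $c_1=0$ and $\rho_a=\rho_b$ in one stroke, after which the $\sin u$, $u$ and constant coefficients of the axial equations give $\mu=p'=0$, $\bfg_b=\mp\bfg_a$ and $\lambda$ proportional to $\rho_a\tan\xi$ or $\rho_a\cot\xi$. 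I checked the coefficient computations (in particular the correlation $\tilde p=+1\Rightarrow q'=-1$ and $\tilde p=-1\Rightarrow q'=+1$ from the ratio of the two $\sin u$ coefficients), and the sign bookkeeping you flag does close up consistently with (\ref{i1})--(\ref{i2}). What the paper's route buys is that it stays entirely differential and never needs the image of $u$ to be treated as an independent variable; what yours buys is that it subsumes the parallel/non-parallel dichotomy and the determination of $\lambda^2=1$ into a single algebraic step, and it produces the explicit interface formula as an early byproduct of the axial equation rather than as a final integration. Your plan to verify sufficiency by direct substitution matches the paper's (largely implicit) treatment of that direction.
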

The formulas for the interface can be converted to arclength parameterization by the standard method of writing $\bfx(s)=\tilde{\bfx}(t(s))$ where $t(s)$ is the inverse of
$s(t) = \int_{t_1}^t |\tilde{\bfx}'(\tilde{t})|d\tilde{t} +s_1$ and $t_2$ is such that $s(t_2) = s_2$.

\begin{proof}  The introduction of  the basis $(\bfe_1, \bfe_2, \bfe_3)$ and $\xi$ are justified by $\bfe_ a \times \bfe_b \ne 0$ (see Lemma \ref{curveLemma})
and  $\bfe_1 \cdot \bfe_{a,b} =0$.
Consider the integrated form of (\ref{eq:compat}),
\beq
\bfQ^a_{\bfx(s) \cdot \bff_a + \theta_a} \Big( \rho_a \bfe_1+ (\bfx(s) \cdot \bfg_a)\bfe_a \Big) = \bfQ^b_{\bfx(s) \cdot \bff_b + \theta_b}  \Big( \rho_b \bfe_1+ (\bfx(s) \cdot \bfg_b)\bfe_b \Big)
+ \bfc.  \label{ecc1}
\eeq
The proof consists of identifying certain components of (\ref{ecc1}).  Dotting (\ref{ecc1}) by $\bfe_1$, $\bfe_2$ and $\bfe_3$, respectively,  yields the equations
\begin{equation}
\begin{aligned}
\rho_a  \cos (\bfx(s) \cdot \bff_a + \theta_a) &= \rho_b  \cos (\bfx(s) \cdot \bff_b + \theta_b) + c_1 , \\
\rho_a \sin (\bfx(s) \cdot \bff_a + \theta_a) -  \tan \xi (\bfx(s) \cdot \bfg_a) &= \rho_b  \sin (\bfx(s) \cdot \bff_b + \theta_b) + \tan \xi (\bfx(s) \cdot \bfg_b) + c_2/ \cos \xi,  \\
\rho_a \sin (\bfx(s) \cdot \bff_a + \theta_a) +  \cot \xi (\bfx(s) \cdot \bfg_a) &= -\rho_b  \sin (\bfx(s) \cdot \bff_b + \theta_b) +\cot \xi (\bfx(s) \cdot \bfg_b) + c_3/ \sin \xi.
\label{ecc3}
\end{aligned}
\end{equation}

We first show that the nonzero vectors $\bff_a$ and $\bff_b$ are parallel.  Suppose, for the sake of a contradiction, they are not.  Then they are linearly independent, and
so there exist linearly independent reciprocal vectors $\bff^a, \bff^b$ satisfying $\bff_a \cdot \bff^a = 1, \bff_b \cdot \bff^b = 1, \bff_a \cdot \bff^b =
\bff_b \cdot \bff^a = 0$.  Therefore, we express the tangent to the interface in the reciprocal basis: 
\begin{equation}
\begin{aligned}\label{eq:reciprocal}
\mathbf{t}(s) = \eta_a(s) \bff^a + \eta_b(s) \bff^b, 
\end{aligned}
\end{equation} 
where the parameters $\eta_{a,b}(s)$ are continuously differentiable due to the hypothesis on the interface.  
We differentiate (\ref{ecc3})$_2$ and (\ref{ecc3})$_3$ with respect to $s$, eliminate  $\cos (\bfx(s) \cdot \bff_a + \theta_a)$
in both cases using (\ref{ecc3})$_1$, and add and subtract the resulting equations.  This gives
\beq
\big(2 \rho_b \cos (\bfx(s) \cdot \bff_b + \theta_b) \bff_a + \bfk_1\big) \cdot \mathbf{t}(s) = 0, \quad \big(2 \rho_b \cos \big(\bfx(s) \cdot \bff_b + \theta_b) \bff_b + \bfk_2\big) \cdot \mathbf{t}(s) = 0 ,  \label{eec4}
\eeq
where $\bfk_1$ and $\bfk_2 $ are explicit functions of $\xi, c_1, \bff_a, \bfg_a, \bfg_b$.  To further simplify,
we insert (\ref{eq:reciprocal}) into (\ref{eec4}), multiply
(\ref{eec4})$_1$ by $\eta_b(s)$ and (\ref{eec4})$_2$ by $\eta_a(s)$, and add to get
\beq
\eta_a(s)^2 (\bfk_2 \cdot \bff^a)  + \eta_a(s) \eta_b(s) (\bfk_1 \cdot \bff^a + \bfk_2 \cdot \bff^b) + \eta_b(s)^2 (\bfk_1 \cdot \bff^b) = 0. \label{eq:qeta}
\eeq
Since the curvature $\kappa(s)$ is non-zero, there exists an $(\tilde{s}_1,\tilde{s}_2) \subset (s_1, s_2)$ such that $\eta_a(s)$ and $\eta_a'(s) \neq 0$ on this sub-interval.  We divide through by $\eta_a(s)^2 \neq 0$ in (\ref{eq:qeta}) to obtain a quadratic equation in $\lambda(s) = \eta_b(s)/\eta_a(s)$ on this reduced interval.   If $\lambda'(s) \neq 0$ on $(\tilde{s}_1,\tilde{s}_2)$, then it immediately follows that the coefficients of this quadratic equation must vanish, i.e., 
\begin{equation}
\begin{aligned}\label{eq:degenQuad}
\bfk_1 \cdot \bff^b = \bfk_2 \cdot \bff^a = \bfk_1 \cdot \bff^a + \bfk_2 \cdot \bff^b = 0.
\end{aligned}
\end{equation}
The derivative is indeed non-vanishing: We notice that $\mathbf{t}(s)/\eta_a(s) = \mathbf{f}^a + \lambda(s) \mathbf{f}^b$ by definition, and consequently, differentiating this quantity yields the desired result since both $\kappa(s)$ and $\eta_a'(s)$ are non-vanishing on this interval.  Hence, (\ref{eq:degenQuad}) is a necessary condition on the parameters.  These equations are then solved by expressing $\bfk_1, \bfk_2$ in the basis $\bff_a, \bff_b$, yielding
\beq
\bfk_1 = \delta \bff_a, \quad \bfk_2 = -\delta \bff_b, \quad \text{ for some } \delta \in \R. \label{ecc5}
\eeq
We insert (\ref{ecc5}) into (\ref{eec4}) to obtain
\beq
(2 \rho_b \cos (\mathbf{x}(s) \cdot \mathbf{f}_b + \theta_b)  + \delta)  \eta_a(s) = 0, \quad (-2 \rho_b \cos (\mathbf{x}(s) \cdot \mathbf{f}_b + \theta_b)  - \delta)  \eta_b(s) = 0.  \label{eec6}
\eeq
Since $\eta_a(s) \neq 0$ on $(\tilde{s}_1, \tilde{s}_2)$, we observe that $\cos(\mathbf{x}(s)\cdot \mathbf{f}_b + \theta_b ) = const$ on this interval.  By the smoothness hypothesis of the interface, it follows that $\mathbf{x}(s) \cdot \mathbf{f}_b =  const$ on this interval.  By differentiation and the parameterization for $\mathbf{t}(s)$ in (\ref{eq:reciprocal}), we conclude $\eta_b(s) =0$ on this interval.  But this means that $\eta_a(s) = 1/|\mathbf{f}^a|$ on this interval since the tangent is a unit vector.  This contradicts the fact that $\eta_a'(s) \neq 0$ on $(\tilde{s}_1, \tilde{s}_2)$.  Therefore, $\bff_a$ and $\bff_b$ are in fact parallel.

Let $\bff_b = \lambda \bff_a$ for some $\lambda \ne 0$ (recall (\ref{nonde})). We show that $\lambda = \pm 1$.  We work in a sufficiently
small neighborhood of $s^* \in (s_1, s_2)$ where $\mathbf{t}(s^*) \cdot \bff_a \ne 0$.   Under our smoothness assumptions we can differentiate 
(\ref{ecc3})$_1$ as many times as we like near $s^*$ as long as we cancel the $\mathbf{t}(s) \cdot \bff_a$ after each differentiation.  Comparing the
first and third derivative of (\ref{ecc3})$_1$ we get that $\lambda^2=1$, so $\bff_b = \pm \bff_a$.  We treat these two cases separately.

Suppose $\bff_b = \bff_a$.  Again we work near $s = s^*$.  Comparing the
first and second derivative of (\ref{ecc3})$_1$, we get that  $\rho_a (\cos (\bfx(s) \cdot \bff_a + \theta_a), \sin (\bfx(s) \cdot \bff_a + \theta_a))= 
\rho_b (\cos (\bfx(s) \cdot \bff_a + \theta_b), \sin (\bfx(s) \cdot \bff_a + \theta_b))$.  Since $\rho_{a,b}>0$ and $0 \le \theta_{a,b} < 2 \pi$, this shows that $\rho_a = \rho_b$ and $\theta_a = \theta_b$. The derivative of (\ref{ecc3})$_2$ with respect to $s$ near $s^*$ implies immediately that $\bfg_a = -\bfg_b$. At this point (\ref{ecc3})$_{1, 2}$ are satisfied with $c _1 =c_2= 0$ and (\ref{ecc3})$_{3}$ becomes a condition that determines $\bfx(s)$.  Consider an arbitrary regular  parameterization $\tilde{\mathbf{x}} \colon (t_1, t_2) \rightarrow \mathbb{R}^2$ given by $\tilde{\mathbf{x}}(t) = \zeta_1(t) \mathbf{u} + \zeta_2(t) \mathbf{v}$, where $(\bfu, \bfv)$ are reciprocal vectors to
the linearly independent vectors $\bff_b, \bfg_b$ as defined by  $\bfu \cdot \bfg_b = \bfv  \cdot \bff_b = 0, \bfu \cdot \bff_b = \bfv \cdot \bfg_b =1$.  We substitute $\mathbf{x}(s(t)) = \tilde{\mathbf{x}}(t)$ into (\ref{ecc3})$_3$ (for $s(t) = \int_{t_1}^{t} |\tilde{\mathbf{x}}'(\tilde{t})| d \tilde{t} + s_1$ with $t_2$ such that $s(t_2) = s_2$) to derive the necessary and sufficient conditions on the arc-length parameterized curve $\mathbf{x} \colon (s_1, s_2) \rightarrow \mathbb{R}^2$.  We find $\zeta_2(t) = \rho_b \tan \xi \sin( \zeta_1(t) + \theta_b) + c$ for any $c \in \mathbb{R}$  solves (\ref{ecc3})$_3$.  This shows that the interface is the graph of a function in the direction $\mathbf{f}_b$, and so without loss of generality, we can set $\zeta_1(t) = t$ and parameterize by arclength to obtain a generic expression for the interface curve $\mathbf{x}(s)$ satisfying (\ref{ecc3})$_3$.  This is given by (\ref{i1}) in the theorem.

Suppose on the other hand $\bff_b = -\bff_a$.   Again comparing the
first and second derivative of (\ref{ecc3})$_1$, we now get that  $\rho_a = \rho_b$ and $\theta_a = -\theta_b$.  In this case
the derivative of (\ref{ecc3})$_3$ with respect to $s$ near $s^*$ implies that $\bfg_a = \bfg_b$, and (\ref{ecc3})$_2$ gives
the formula (\ref{i2}) for the interface.
\end{proof}

Examples of vertical, horizontal, helical and elliptic interfaces consistent with Theorems \ref{VHHThrm} and \ref{EllipticThrm} are given in Figure \ref{fig:interfaces}.  All four types of interface can be extended to be global solutions in typical cases.  By substituting the formulas for the interface
given in Theorem \ref{elliptical} into the formula (\ref{pos2}), one can prove that the elliptic interfaces are indeed ellipses in the
the helical configuration.  Elliptic interfaces are typically not orientable in the sense of (\ref{orient}): this explains the appearance of the
overlapping reference domains in Figure \ref{fig:interfaces}d, which have been displaced from each other to be easily visible.  Figure
\ref{fig:interfaces}d illustrates motion of the elliptical interface, but typically vertical and helical interfaces cannot be moved.  
We discuss further the mobility of interfaces in Section \ref{sec9}.

\begin{figure}[ht!]
	\centering
	\includegraphics[width=\textwidth]{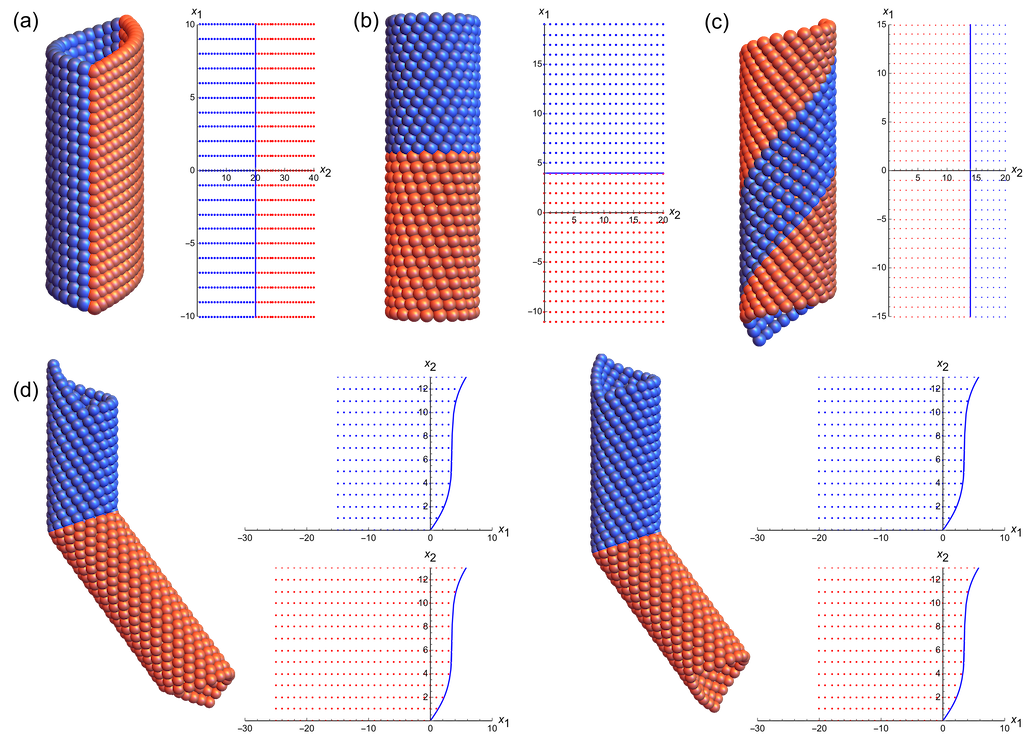}
	\caption{Examples of vertical, horizontal, helical, and elliptical interfaces ((a)-(d)) between phase $a$ (blue) and phase $b$ (red) shown in the deformed configuration (left) and reference domain (right; interfaces in blue).  The local solutions are extended to global loops, where possible. They correspond to the choices $|\bfr_a|=|\bfr_b|=1$ (all cases) and 
(a)  $\bfx(s) =  (s, 20)$,  $\bff_a = (0, 0.25), \bff_b = (0, 0.15),  \bfg_a = (0.28, -0.08),  \bfg_b = (0.28,0.084)$
(b)  $\bfx(s) = (4,s)$, \ $\bff_a = (0.12,\pi/10), \bff_b = (-0.4, \pi/10), \bfg_a = (0.262, 0),  \bfg_b = (0.3,0)$, 
(c)  $\bfx(s) = (s,14)$, $\bff_a = (0.2,0.22), \bff_b = (0.2, 0.33), \bfg_a = (0.27, -0.18),  \bfg_b = (0.27,0.081)$, 
(d)  $\tilde{\bfx}(t) =  t(3.5/\pi, 7/\pi) + \tan(\pi/10) \sin(t)  (10/3,-5/3)$ (non-arclength parameterization, see Theorem \ref{EllipticThrm}), $\bff_a = \bff_b = (2\pi/35, 4 \pi/35), \bfg_a = -\bfg_b = (-0.24,0.12)$. 
}
	\label{fig:interfaces}
\end{figure}

\section{Local compatibility using near neighbor generators}
\label{sec7}

Nearest neighbor generators provide  convenient descriptors for a helical structure, as neighbors in the $\mathbb{Z}^2$ lattice are neighboring points in the helical structure.  These generators have other key properties as shown Sections \ref{sec3} and \ref{sec4}: (i) they can be explicitly obtained for any discrete Abelian helical group under mild assumptions, and (ii) they have a suitable reference configuration. 
By examining analogs in helical structures of the concepts of slip and twinning in crystals,  we have noticed that the study of compatibility using  a fixed set of nearest neighbor 
generators for each phase is too restrictive: some of the
excluded cases are interesting.  These cases include  examples of additional compatible interfaces (analogs of twins) obtained by switching to a 
different choice of nearest neighbor generators for one of the phases. 
Therefore, in this section we are led to consider a certain precise notion of 
{\it near neighbor generators} with the properties (i) and (ii), and to  study the resulting
compatible interfaces. The concept of compatibility used here with near neighbor generators is the same as the one used above.

\subsection{Lattice invariant transformations}
\label{alt}

We first recall the basic invariance of the $\mathbb{Z}^2$ lattice \cite{pz_book_02}.  The set of invertible linear transformations mapping $\mathbb{Z}^2$ to $\mathbb{Z}^2$ is
\begin{align}\label{eq:Lattice1}
GL(\mathbb{Z}^2) = \Big\{ \boldsymbol{\mu} \in \mathbb{R}^{2\times2} \colon \mu_{ij} \in \mathbb{Z}, \; i =1,2,\; j = 1,2, \ \  \det \boldsymbol{\mu} \in \{ \pm 1\} \Big\}. 
\end{align}
Consider the nearest neighbor parameterizations of the two helical phases given by (\ref{pos2}), and bring out the dependence of these formulas on the group parameters by writing 
these positions as $\bfy_a(p,q) = \bfy(p,q; \bff_a, \bfg_a, \mathbf{r}_a, \bfz_a)$ and $\bfy_b(p,q) =  \bfy(p,q; \bff_b, \bfg_b, \mathbf{r}_b, \bfz_b)$.
Each element $\boldsymbol{\mu} \in GL(\mathbb{Z}^2) $ gives an alternative parameterization of these same two given helical phases by replacing $(p,q) = \boldsymbol{\mu}(\tilde{p},\tilde{q})$,
with $(\tilde{p},\tilde{q})$ in the domains $\tilde{\calD}_{a,b} = \boldsymbol{\mu}^{-1} (\Z \times \{1, \dots, q^{\star}_{a,b}\})$, respectively.  
As can be seen from the formulas (\ref{pos2}), the matrix $\boldsymbol{\mu}$
can be moved onto the group parameters.  The positions
\begin{align}\label{eq:transpq}
\bfy(\tilde{p},\tilde{q}; \boldsymbol{\mu}^{T}\bff_a, \boldsymbol{\mu}^{T}\bfg_a, \mathbf{r}_a, \bfz_a), \ \  (\tilde{p},\tilde{q}) \in \tilde{\calD}_a, \quad \bfy(\tilde{p},\tilde{q}; \boldsymbol{\mu}^{T}\bff_b, \boldsymbol{\mu}^{T}\bfg_b, \mathbf{r}_b, \bfz_b), \ \  (\tilde{p},\tilde{q}) \in \tilde{\calD}_b,
\end{align}
are therefore the same two sets of atomic positions as given by $\bfy_a(p,q), \bfy_b(p,q)$.  If the two phases are compatible across an interface $(p(s), q(s))$, then the positions 
(\ref{eq:transpq}) are compatible across the interface $(\tilde{p}(s), \tilde{q}(s)) = \boldsymbol{\mu}^{-1}(p(s), q(s))$.

As can be seen from Section \ref{sec3}, nearest neighbor generators are defined using a particular helical structure, i.e., a particular choice of $(\bff, \bfg)$. 
Thus, it may happen that the formulas for nearest neighbor generators (given just after (\ref{mingen})) evaluated at the new group 
parameters  $(\boldsymbol{\mu}^{T}\bff, \boldsymbol{\mu}^{T}\bfg)$  are not nearest neighbor generators.   Also, it can happen that formulas for non-nearest neighbor 
generators evaluated at particular choices of  $\bff$ and $\bfg$ give nearest neighbor generators when evaluated at $(\boldsymbol{\mu}^{T}\bff, \boldsymbol{\mu}^{T}\bfg)$.

In summary, given certain formulas for generators $g_1, g_2$ of the group $G$ evaluated at $(\bff, \bfg)$, then those formulas evaluated at $(\boldsymbol{\mu}^{T}\bff, \boldsymbol{\mu}^{T}\bfg)$
give exactly the same helical structure.   However, if $g_1, g_2$ evaluated at $(\bff, \bfg)$ are nearest neighbor generators, then  $g_1, g_2$ evaluated at 
$(\boldsymbol{\mu}^{T}\bff, \boldsymbol{\mu}^{T}\bfg)$ are not generally nearest neighbor generators, and vice versa.  If the {\it same} $\boldsymbol{\mu} \in GL(\mathbb{Z}^2) $ is applied
to compatible phases $a$ and $b$, then they remain compatible, the interface is unchanged in the helical configuration,  but the description of the interface in  terms of $g_1, g_2$ evaluated at 
$(\boldsymbol{\mu}^{T}\bff, \boldsymbol{\mu}^{T}\bfg)$ changes to $(\tilde{p}(s), \tilde{q}(s)) = \boldsymbol{\mu}^{-1}(p(s), q(s))$.

We can also transform group parameters using {\it different} elements of $GL(\Z^2) $ for the two lattices.  By the discussion above, we can without loss of generality
leave one lattice unchanged, since applying a $\boldsymbol{\mu} \in GL(\mathbb{Z}^2)$ to both sets of generators is a equivalent to a change of reference configuration.   Let us fix the nearest neighbor generators of phase $a$ so that the group parameter are $(\mathbf{f}_a, \mathbf{g}_a)$ and apply $\boldsymbol{\mu} \in GL(\Z^2)$ to the the nearest neighbor generators of phase $b$ so that the group parameters are $(\boldsymbol{\mu}^T\mathbf{f}_b, \boldsymbol{\mu}^T \mathbf{g}_b)$.  Again, if we use
the appropriate domains of integers, both structures are exactly the same.  However, the meaning of the compatibility conditions changes, because nearby pairs of integers do not in general give nearby points in the helical structure of phase $b$.  

This is not a problem\textemdash when thinking in terms of phase transformations\textemdash as long as nearby points on the helical structure prior to transformation remain {\it reasonably close} after transformation.  If we ignore the fact that $\boldsymbol{\mu} \in GL(\mathbb{Z}^2)$ and think of $\boldsymbol{\mu}$ and a general $2 \times 2$ matrix, the formulas (\ref{mingen})$_{\rm ff}$  for generators depend smoothly on  $\boldsymbol{\mu}$.  Thus, a simple measure of ``reasonable closeness'' is  $|\boldsymbol{\mu} - \bfI |$.  

With these physical considerations in mind, we can  generalize the local compatibility condition, Theorem \ref{VHHThrm},  for vertical, horizontal and helical interfaces to include other choices of generators beyond those for nearest neighbor generators $\{ \mathbf{f}_a, \mathbf{g}_a\}$ and $\{ \mathbf{f}_b, \mathbf{g}_b\}$.  Referring to Theorem \ref{VHHThrm}, the condition is 
\begin{equation}
\begin{aligned}\label{eq:CompatEq}
\Big((\mathbf{f}_b, \mathbf{g}_b)^T \boldsymbol{\mu} - (\mathbf{f}_a, \mathbf{g}_a)^T \Big)\mathbf{t} = 0
\end{aligned}
\end{equation}
from some unit tangent $\mathbf{t}$ and appropriate inequalities that define the subcases vertical, horizontal and helical. (We have simply replaced $\{ \mathbf{f}_b, \mathbf{g}_b\}$ in Theorem \ref{VHHThrm} with $\{ \boldsymbol{\mu}^T \mathbf{f}_b, \boldsymbol{\mu}^T \mathbf{g}_b\}$.)  Note that the nondegeneracy condition (\ref{nonde}) is satisfied for $\{ \boldsymbol{\mu}^T \mathbf{f}_b, \boldsymbol{\mu}^T \mathbf{g}_b\}$ if and only if it is satisfied for $\{  \mathbf{f}_b,  \mathbf{g}_b\}$.

We  focus on vertical, horizontal and helical interfaces here, because no additional locally compatible elliptic interfaces are obtained if we include a lattice invariant transformation $\boldsymbol{\mu} \in GL(\mathbb{Z}^2)$ of phase $b$.  
Also, as discussed at the beginning of Section \ref{sec6}, recall that opposite sides of the reference interface need not be mapped to opposite
sides of the deformed interface.

Following the remark above about ``near closeness'' we define {\it near neighbor generators} as those associated to $\boldsymbol{\mu} \in GL(\mathbb{Z}^2)$ of the form
\begin{equation}
\begin{aligned}
\mathcal{N}(GL(\mathbb{Z}^2)) = \left\{ \left(\begin{array}{cc} \sigma_1 & \sigma_2 \\ \sigma_3 & \sigma_4 \end{array}\right) \colon \sigma_{1,2,3,4} \in \{ \pm 1, 0\}, \;\; \sigma_1 \sigma_4 - \sigma_2 \sigma_3 \in \{ \pm 1\} \right\}.
\end{aligned}
\end{equation}
When the two phases $a$ and $b$ are the same these represent slip by one lattice spacing, or twinning.
An easy enumeration shows that 
\beq
 \mathcal{N}(GL(\mathbb{Z}^2)) = \mathcal{N}^{(+)}(GL(\mathbb{Z}^2)) \cup \mathcal{N}^{(-)}(GL(\mathbb{Z}^2)),   \label{N}
 \eeq
 where
\begin{equation}
\begin{aligned}
\mathcal{N}^{(+)}(GL(\mathbb{Z}^2)) &= \Big\{ \pm \left(\begin{array}{cc} 1 & 0 \\ 0 & 1 \end{array}\right), \pm \left(\begin{array}{cc} 0 & -1 \\ 1&  0  \end{array}\right), \pm \left(\begin{array}{cc} 1 & -1 \\ 1&  0  \end{array}\right), \pm \left(\begin{array}{cc} 1 & 0 \\ -1 &  1  \end{array}\right) ,  \pm \left(\begin{array}{cc} 0 & 1 \\ -1 &  1  \end{array}\right),\\
&\qquad   \pm \left(\begin{array}{cc} 1 & 1 \\ -1&  0  \end{array}\right), \pm \left(\begin{array}{cc} 1 & -1 \\ 0 &  1  \end{array}\right),  \pm \left(\begin{array}{cc} 1 & 1 \\ 0 &  1  \end{array}\right),  \pm \left(\begin{array}{cc} 1 & 0 \\ 1 &  1  \end{array}\right) , \pm \left(\begin{array}{cc} 0 & -1 \\ 1 &  1  \end{array}\right) \Big\};  \\
\mathcal{N}^{(-)}(GL(\mathbb{Z}^2)) &= \left(\begin{array}{cc} 0 & 1\\1 & 0 \end{array}\right)\mathcal{N}^{(+)}(GL(\mathbb{Z}^2)),
\end{aligned}  \label{Npm}
\end{equation}
and the superscript $(\pm)$ indicates the sign of the determinant.

\section{Slip and twinning in helical structures}
\label{sec8}

In this section, we consider the two phases $a$ and $b$ to be the same, in the sense that they
are related by an orthogonal transformation and translation.  In this case, compatible deformations are analogous to slip or twinning.  
Our main question is whether we can have compatible vertical, horizontal or helical interfaces, between
two copies of the same phase that are oriented differently.  

\subsection{Local compatibility of helical structures in the same phase}
\label{same}

In this section we define precisely what it means that phase $b$ is
the same phase as phase $a$.  Let phase $a$ be given with nearest neighbor group parameterization $\{\bff, \bfg\}$, where we drop the subscript ``$a$'' for simplicity.  
The positions of phase $a$ are $\bfy(p,q; \bff, \bfg,\mathbf{r}, \bfz)$, where  $(p,q) \in \calD =  \Z \times \{1, \dots, q^{\star}\}$.  In view of Theorem \ref{VHHThrm} , we have extended the definition of $\bfy$  to
$\mathbf{x}=(p,q) \in \calD^c = \R \times (0, q^{\star})$.

Guided by the basic invariance of quantum mechanics\textemdash orthogonal transformations with determinant $\pm 1$ and translations\textemdash we consider
a second copy of phase $a$ given by
\beq
\hat{\bfQ} \bfy(\mathbf{x}; \boldsymbol{\mu}^T\bff, \boldsymbol{\mu}^T\bfg,\mathbf{r}, \bfz) + \hat{\bfc}, \quad \hat{\bfQ} \in {\rm O(3)},\ \  \hat{\bfc} \in \R^3,  \ \ \mathbf{x}  = (p,q) \in \boldsymbol{\mu}^{-1} \calD^c,
\eeq
where we have allowed for a change to near neighbor generators by introducing $\boldsymbol{\mu} \in \mathcal{N}^{(\sigma)}(GL(\mathbb{Z}^2))$.  

We seek a locally compatible vertical,
horizontal and helical interfaces between
$\bfy(\mathbf{x}; \bff, \bfg, \mathbf{r}, \bfz)$ and the copy $\hat{\bfQ} \bfy(\mathbf{x}; \boldsymbol{\mu}^T\bff, \boldsymbol{\mu}^T\bfg,\mathbf{r}, \bfz) + \hat{\bfc}$  at an interface $\mathbf{x}(s) \in \calD^c \cap \boldsymbol{\mu}^{-1}\calD^c$ for $s \in (s_1, s_2)$.  
 In these cases, the two phases have a common axis, so that necessarily $\hat{\bfQ} \bfe = \pm \bfe$.   The four families of $\hat{\bfQ} \in$ O(3)
 satisfying $\hat{\bfQ} \bfe = \pm \bfe$ are
 \beqs
 \det \hat{\bfQ} &=& +1\ \  \Longrightarrow   \left\{ \begin{array}{l}   \hat{\bfQ} = -\bfI + 2 \bfe^{\perp} \otimes  \bfe^{\perp}, \ \ \bfe^{\perp} \cdot \bfe =0, \ \ 
 |\bfe^{\perp}| = 1,  \ \  \hat{\bfQ} \bfQ = \bfQ^T \hat{\bfQ},\   or \\  \hat{\bfQ} = \bfR_{\bfe} , \ \ \bfR_{\bfe} \bfe = \bfe, \ \  \bfR_{\bfe} \in {\rm SO(3)}, \ \  \hat{\bfQ} \bfQ = \bfQ \hat{\bfQ}, \end{array} \right. \nonumber \\
   \det \hat{\bfQ} &=& -1 \ \   \Longrightarrow   \left\{ \begin{array}{l}   \hat{\bfQ} = \bfI - 2 \bfe^{\perp} \otimes  \bfe^{\perp}, \ \ \bfe^{\perp} \cdot \bfe =0, \ \  |\bfe^{\perp}| = 1, \ \  \hat{\bfQ} \bfQ = \bfQ^T \hat{\bfQ}, \  or \\
   \hat{\bfQ} = -\bfR_{\bfe} , \ \ \bfR_{\bfe} \bfe = \bfe, \ \  \bfR_{\bfe} \in {\rm SO(3)},  \ \  \hat{\bfQ} \bfQ = \bfQ \hat{\bfQ}. \end{array} \right. \label{cases}
 \eeqs
 The second copy of phase $a$ can then be expressed in the form
\beq
\hat{\bfQ} \bfy(\bfx ; \boldsymbol{\mu}^T\bff, \boldsymbol{\mu}^T\bfg, \mathbf{r}, \bfz) + \hat{\bfc} = \bfy(\mathbf{x}; \pm \boldsymbol{\mu}^T\bff, (\pm)\boldsymbol{\mu}^T\bfg, \hat{\mathbf{Q}} \mathbf{r}, \hat{\bfQ}\bfz + \hat{\mathbf{c}}), \quad  \bfx \in \boldsymbol{\mu}^{-1} \calD^c.  \label{copy}
\eeq
Here, the $\pm$ arises because $\hat{\bfQ} \bfQ_{\theta} =   \bfQ_{\pm \theta} \hat{\bfQ}$ for some choice of $\pm$  in all cases of (\ref{cases});
the other (independent) choice $(\pm)$ arises from $\hat{\bfQ} \bfe = (\pm) \bfe$.
We identify phase $b$ with the copy of phase $a$ described in (\ref{copy}) and characterize solutions to the  compatibility conditions (\ref{eq:CompatEq}) corresponding to vertical, horizontal and helical interfaces for near neighbor generators.  We make two observations that simplify the analysis below.  
\begin{enumerate}
\item We  drop $(\pm)$ in (\ref{copy}).  This is justified as long as we analyze all near neighbor generators, or an appropriate subset
that is invariant under multiplication by $\pm 1$, (see (\ref{N}), (\ref{Npm})).
\item For horizontal and helical interfaces the condition  $\mathbf{r}_a = \mathbf{Q}^b_{ \mathbf{x}(s_1) \cdot (\mathbf{f}_b - \mathbf{f}_a)} \mathbf{r}_b$
of Theorem \ref{VHHThrm} can be satisfied for all cases of (\ref{cases}).  That is, we satisfy $\bfr_b = \hat{\bfQ} \bfr_a = \hat{\bfQ} \bfr$ by choosing $\bfR_{\bfe}$ or $\bfe^{\perp}$ in (\ref{cases})  appropriately, i.e., choose $\hat{\bfQ} =  \mathbf{Q}^b_{ -(\mathbf{x}(s_1) \cdot (\mathbf{f}_b - \mathbf{f}_a))}$.  We assume that this is done in all cases below where we discuss helical or horizontal interfaces. For these cases the condition (\ref{orient})
of orientability becomes
\beq
(\bfy_a,_p \times\, \bfy_a,_q)  \cdot  (\bfy_b,_p \times\, \bfy_b,_q ) (s)=  \pm (\det \boldsymbol{\mu}) (\bff \cdot \bfg^{\perp})^2
\big( \bfr \cdot \bfQ_{((\pm \boldsymbol{\mu}^T - \bfI) \cdot \bft )(s-s_1)} \bfr \big)>0. \label{orient1}
\eeq	
\end{enumerate}
In Sections \ref{subsec:slip} and \ref{subsec:twin} below we treat separately the two cases in which the $b$ phase generators are given by $\{\boldsymbol{\mu}^T \mathbf{f}, \boldsymbol{\mu}^T \mathbf{g}\}$ and $\{-\boldsymbol{\mu}^T \mathbf{f}, \boldsymbol{\mu}^T \mathbf{g}\}$.   It will emerge below that these cases correspond to ``slips" and ``twins", respectively.

\subsection{Examples: Slips}
\label{subsec:slip}
In this section we choose $+$ of $\pm$ that occurs in Section \ref{same}.  Thus, the group parameters are $\{ \mathbf{f}_a, \mathbf{g}_a\} = \{ \mathbf{f}, \mathbf{g}\}$, which are required to satisfy the
nondegeneracy conditions  (\ref{nonde}), and $\{ \mathbf{f}_b, \mathbf{g}_b\} = \{ \boldsymbol{\mu}^T \mathbf{f}, \boldsymbol{\mu}^T \mathbf{g} \}$.
Thus, to obtain a locally compatible interface in the sense of Theorem  \ref{VHHThrm}, 
we need to find a $\boldsymbol{\mu} \in \mathcal{N}(GL(\mathbb{Z}^2))$ and $\mathbf{t} \in \mathbb{S}^1$ such that  $(\mathbf{f},\mathbf{g})^T (\boldsymbol{\mu} - \mathbf{I}) \mathbf{t} = 0$.  Since $(\mathbf{f}, \mathbf{g})^T$ is invertible by  (\ref{nonde}), the latter
\begin{figure*}[ht!]
	\centering
	\includegraphics[width=0.8\textwidth]{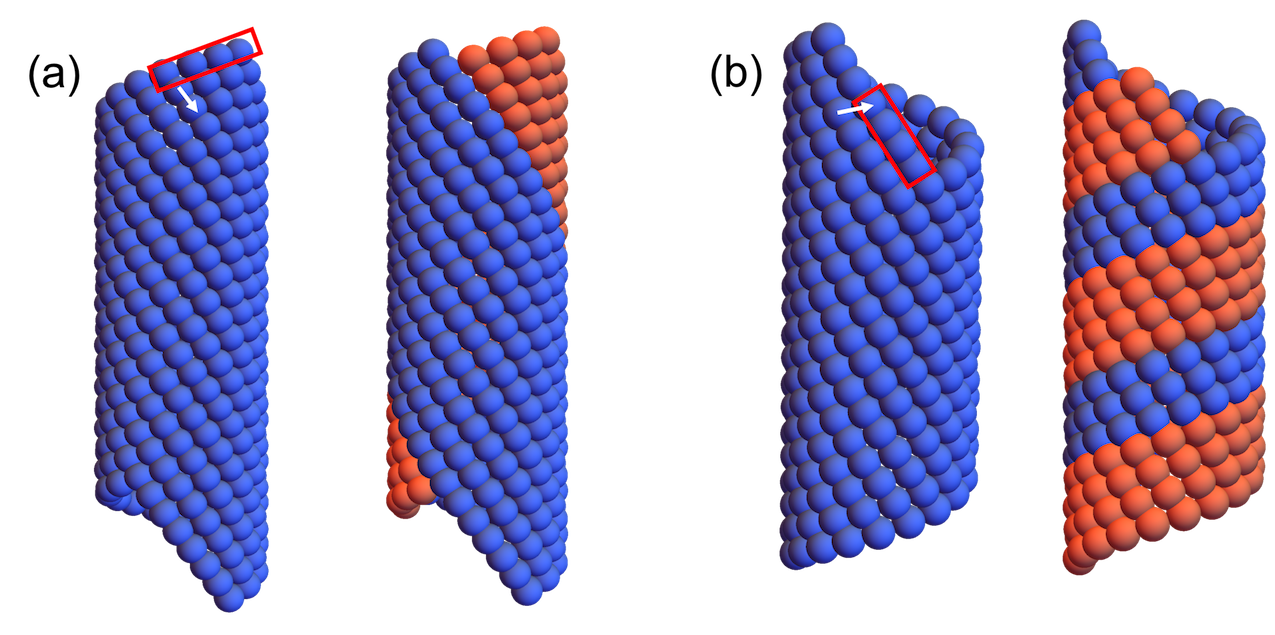}
	\caption{Examples of slips.  (a). Along the nearest neighbor generator:  $\mathbf{t}= \mathbf{e}_1$, $\boldsymbol{\mu} = (\mathbf{e}_1,\mathbf{e}_1 + \mathbf{e}_2)$, $\mathbf{f} = (2\pi/35)(1,2)$, $\mathbf{g} =  0.12(-2,1)$. (b). Along the second nearest neighbor generator: $\mathbf{t} = \mathbf{e}_2$, $\boldsymbol{\mu} = (\mathbf{e}_1+\mathbf{e}_2, \mathbf{e}_2)$, $\mathbf{f} = (2\pi/35)(1,2)$, $\mathbf{g} =  0.12(-2,1)$.  The arrows indicate slip vectors (see (\ref{eq:slipVector})). }
	\label{fig:slips}
\end{figure*}
is equivalent to $(\boldsymbol{\mu} - \mathbf{I})\mathbf{t} = 0$.  
Notice that the existence of a slip is independent of the generators of the helical structure $\{\mathbf{f}, \mathbf{g} \}$.  In other words, if a slip exists for one helical structure, then it is universal in the sense that an analogous slip exists for all the others.  However, as in crystals, the loading (e.g., analogs of the Schmid stress) and atomic forces may favor some slips over others.

We first discuss orientable interfaces.  Under the conditions assumed here, the formula (\ref{orient1}) for orientability simplifies to 
$\det \boldsymbol{\mu} > 0$, and so we seek solutions for $\boldsymbol{\mu} \in \mathcal{N}^{(+)}(GL(\mathbb{Z}^2))$, excluding
cases when there is no slip (i.e., $\boldsymbol{\mu} = \mathbf{I}$).  
We assume $0 \leq (\mathbf{t} \cdot \mathbf{e}_2) \leq 1$ without loss of generality.  
There are four cases in total:
\begin{enumerate}
\item[(i).]  Slip along the nearest neighbor generator, i.e., 
\begin{itemize}
\item $\mathbf{t} =\left(\begin{array}{c} 1 \\ 0 \end{array}\right), \quad \boldsymbol{\mu} = \left\{ \left(\begin{array}{cc} 1 &  1 \\ 0 & 1\end{array}\right),  \left(\begin{array}{cc} 1 & - 1 \\ 0 & 1\end{array}\right) \right\}$;
\end{itemize}
\item[(ii).]  Slip along the second nearest neighbor generator, i.e, 
\begin{itemize}
\item  $\mathbf{t} =\left(\begin{array}{c} 0 \\ 1 \end{array}\right),\quad \boldsymbol{\mu} = \left\{\left(\begin{array}{cc} 1 &  0 \\ 1 &  1\end{array}\right), \left(\begin{array}{cc} 1 &  0 \\ -1 &  1\end{array}\right)\right\}$.
\end{itemize}
\end{enumerate}
The case (i)  represent slip along closest-packed lines, i.e., lines of nearest neighbor atoms, and (ii) selects slip along next nearest
neighbor atoms.  Thus, the condition of orientability nicely selects the
closest-packed directions among near neighbor generators, which are expected to be favored by helical structures, as are 
the close-packed $\{111\}$ family of planes in FCC crystals\footnote{The elementary reasoning in the two cases is the same: these are lines
of atoms with least corrugation.}.  

An elegant formula can be given for the  tangent to the interface on the helical structure $\mathbf{t}_{\mathbf{y}}$ that encodes all the parameters\footnote{We choose $\mathbf{x}(s_1) = \mathbf{0}$ in this formula without loss of generality.}:
\begin{equation}
\begin{aligned}\label{eq:slipVector}
\mathbf{t}_{\mathbf{y}} = \frac{\nabla \mathbf{y}_a(\mathbf{0}) \mathbf{t}}{|\nabla \mathbf{y}_a(\mathbf{0}) \mathbf{t}|} = \frac{(\mathbf{f} \cdot \mathbf{t}) \mathbf{W}\mathbf{r} + (\mathbf{g} \cdot \mathbf{t}) \mathbf{e}}{\sqrt{|\mathbf{r}|^2 (\mathbf{f} \cdot \mathbf{t})^2 + (\mathbf{g} \cdot \mathbf{t})^2}}. 
\end{aligned}
\end{equation}
The slip vectors corresponding to the examples in Figure \ref{fig:slips} are given by:
\begin{table}[!ht]
	\centering
	\begin{threeparttable}
		\label{tab:slips_physical}
		\begin{tabular}{cccc}
			\toprule
			Slip Vector &	 Slip (a)		  &  Slip (b) 	\\
			\midrule 	
		$\mathbf{t}_{\mathbf{y}}$ &(0, 0.598974, -0.800769)  & (0, 0.948429, 0.316989)   \\
			\bottomrule
		\end{tabular}
		
	\end{threeparttable}
\end{table}

Some non-orientable cases are also interesting, particularly for special loadings or special parameters.  For example, in Figure \ref{fig:slips} a
slip in the direction of any of the six atoms surrounding any atom might be considered reasonable, depending on atomic forces and
loading.  Also, we have six nearest neighbors.  These {\it close packed helical structures} have non-orientable solutions of $(\boldsymbol{\mu} - \mathbf{I})\mathbf{t} = 0$, e.g.,  $\left(\begin{array}{rr} 0 & -1\\ -1 & 0 \end{array}\right) \in \mathcal{N}^{(-)}(GL(\mathbb{Z}^2)$ with $\bft = (1,-1)$.

\subsection{Examples: Twins.}
\label{subsec:twin}
In this section we choose $-$ of $\pm$ that occurs in Section \ref{same}. 
Thus, the group parameters are $\{ \mathbf{f}_a, \mathbf{g}_a\} = \{ \mathbf{f}, \mathbf{g}\}$ and $\{ \mathbf{f}_b, \mathbf{g}_b\} = \{ -\mathbf{f}, \mathbf{g}\}$, and they satisfy the non-degeneracy condition in (\ref{nonde}).   Thus, to obtain an {\it orientable locally compatible twin}, we require $\sigma = -\sign\big( (\mathbf{f} \cdot \mathbf{g}^{\perp})^2\big) = -1$, and we seek a $\boldsymbol{\mu} \in \mathcal{N}^{(-)}(GL(\mathbb{Z}^2))$ and $\mathbf{t} \in \mathbb{S}^1$ such that  $((-\mathbf{f},\mathbf{g})^T \boldsymbol{\mu} - (\mathbf{f},\mathbf{g})^T) \mathbf{t} = 0$ and
$(-\mathbf{f},\mathbf{g})^T \boldsymbol{\mu} \ne  (\mathbf{f},\mathbf{g})^T$ (distinct phases).  As indicated
by the terminology, these solutions represent helical analogs of twinning.   The solutions depend fundamentally on the given generators (i.e., akin to $\lambda_2 = 1$ for martensitic phase transformations \cite{chen_study_2013,gu_17}).  This dependence can be catalogued based on the type of interface.
\begin{lemma}[Twinning Lemma]\label{TwinningLemma}
Let $\{ \mathbf{f}_{a,b}, \mathbf{g}_{a,b}\}$ as above, subject to (\ref{nonde}) and $\sigma = -1$.
A helical structure with these parameters can form an orientable locally compatible twin, i.e.,
$((-\mathbf{f},\mathbf{g})^T \boldsymbol{\mu} - (\mathbf{f},\mathbf{g})^T) \mathbf{t} = 0$ for $\mathbf{t} \in \mathbb{S}^1$ and $\boldsymbol{\mu} \in \mathcal{N}^{(-)}(GL(\mathbb{Z}^2))$, with
$(-\mathbf{f},\mathbf{g})^T \boldsymbol{\mu} \ne  (\mathbf{f},\mathbf{g})^T$,   if and only if $ \boldsymbol{\mu}^T (-\mathbf{f},\mathbf{g}) \ne  (\mathbf{f},\mathbf{g})$ and
\begin{enumerate}
\item[(i).] (Vertical Twin)  $\boldsymbol{\mu} \in \mathcal{N}^{(-)}(GL(\mathbb{Z}^2))$ and $\mathbf{t} \in \mathbb{S}^1$ satisfy $\boldsymbol{\mu} \mathbf{t} = \mathbf{t}$ and $\mathbf{t}^{\perp} \parallel  \mathbf{f}$. 
\item[(ii).] (Horizontal Twin) $\boldsymbol{\mu} \in \mathcal{N}^{(-)}(GL(\mathbb{Z}^2))$ and $\mathbf{t} \in \mathbb{S}^1$ satisfy $\boldsymbol{\mu} \mathbf{t} = -\mathbf{t}$ and $\mathbf{t}^{\perp} \parallel  \mathbf{g}$. 
\item[(iii).] (Helical Twin)   $\boldsymbol{\mu} \in \mathcal{N}^{(-)}(GL(\mathbb{Z}^2))$ and $\mathbf{t} \in \mathbb{S}^1$  are such that
\begin{equation}
\begin{aligned}
\text{$\mathbf{t}$ is not an eigenvector of $\boldsymbol{\mu}$,} \quad \bff \parallel ((\boldsymbol{\mu} + \bfI)\bft)^{\perp}, \ \ {\rm and} \ \ 
\bfg \parallel ((\boldsymbol{\mu} - \bfI)\bft)^{\perp}.
 \end{aligned}
 \end{equation}
\end{enumerate}
\end{lemma}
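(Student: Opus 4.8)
The plan is to reduce ``orientable locally compatible twin'' to the two scalar identities that the compatibility equation is equivalent to, and then apply the vertical/horizontal/helical trichotomy of Theorem~\ref{VHHThrm} according to whether $\mathbf{f}\cdot\mathbf{t}$ and $\mathbf{g}\cdot\mathbf{t}$ vanish. First I would collect the standing reductions from Sections~\ref{alt}--\ref{same}: after the lattice-invariant transformation $\boldsymbol{\mu}$ the effective generators of phase $b$ are $\{-\boldsymbol{\mu}^T\mathbf{f},\,\boldsymbol{\mu}^T\mathbf{g}\}$ (the ``$-$'' choice in (\ref{copy}), with $(\pm)$ dropped so that $\mathbf{e}_b=\mathbf{e}_a=\mathbf{e}$), while phase $a$ keeps $\{\mathbf{f}_a,\mathbf{g}_a\}=\{\mathbf{f},\mathbf{g}\}$; orientability is equivalent to $\det\boldsymbol{\mu}<0$, i.e.\ $\boldsymbol{\mu}\in\mathcal{N}^{(-)}(GL(\mathbb{Z}^2))$, by (\ref{orient1}); and by observation~2 of Section~\ref{same} the condition $\mathbf{r}_a=\mathbf{Q}^b_{\mathbf{x}(s_1)\cdot(\mathbf{f}_b-\mathbf{f}_a)}\mathbf{r}_b$ needed for horizontal and helical interfaces can always be arranged through the choice of $\hat{\mathbf{Q}}$, so it imposes no constraint. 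Consequently, for a pair $(\boldsymbol{\mu},\mathbf{t})$ with $\boldsymbol{\mu}\in\mathcal{N}^{(-)}(GL(\mathbb{Z}^2))$ and $\mathbf{t}\in\mathbb{S}^1$, being an orientable locally compatible twin means exactly: the compatibility equation $\big((-\mathbf{f},\mathbf{g})^T\boldsymbol{\mu}-(\mathbf{f},\mathbf{g})^T\big)\mathbf{t}=0$ of (\ref{eq:CompatEq}) holds (which, since $\mathbf{f},\mathbf{g}$ are a basis of $\mathbb{R}^2$ by (\ref{nonde}), automatically lands one in one of the three subcases of Theorem~\ref{VHHThrm}), together with distinctness $\boldsymbol{\mu}^T(-\mathbf{f},\mathbf{g})\neq(\mathbf{f},\mathbf{g})$, which is the transpose of $(-\mathbf{f},\mathbf{g})^T\boldsymbol{\mu}\neq(\mathbf{f},\mathbf{g})^T$. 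The distinctness clause is carried along unchanged, so the substance is the equivalence of the compatibility equation (sorted by subcase) with (i)--(iii).

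Reading off the two rows of $\big((-\mathbf{f},\mathbf{g})^T\boldsymbol{\mu}-(\mathbf{f},\mathbf{g})^T\big)\mathbf{t}=0$ and using $\mathbf{f}^{T}\boldsymbol{\mu}\mathbf{t}=\mathbf{f}\cdot\boldsymbol{\mu}\mathbf{t}$, the compatibility equation is equivalent to the pair $\mathbf{f}\cdot(\boldsymbol{\mu}+\mathbf{I})\mathbf{t}=0$ and $\mathbf{g}\cdot(\boldsymbol{\mu}-\mathbf{I})\mathbf{t}=0$. Since $\mathbf{f},\mathbf{g}$ span $\mathbb{R}^2$ and $\mathbf{t}\neq0$, at most one of $\mathbf{f}\cdot\mathbf{t}$, $\mathbf{g}\cdot\mathbf{t}$ can vanish, and the three exclusive possibilities match the vertical/horizontal/helical subcases of Theorem~\ref{VHHThrm} read off with the effective $b$-parameters ($\mathbf{t}\cdot\mathbf{f}_b=-\mathbf{f}\cdot\boldsymbol{\mu}\mathbf{t}$, $\mathbf{t}\cdot\mathbf{g}_b=\mathbf{g}\cdot\boldsymbol{\mu}\mathbf{t}$). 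If $\mathbf{f}\cdot\mathbf{t}=0$, the $\mathbf{f}$-identity also gives $\mathbf{f}\cdot\boldsymbol{\mu}\mathbf{t}=0$, so $\mathbf{t}$ and $\boldsymbol{\mu}\mathbf{t}$ are both perpendicular to $\mathbf{f}$, hence $\boldsymbol{\mu}\mathbf{t}=\lambda\mathbf{t}$; the $\mathbf{g}$-identity $\mathbf{g}\cdot\boldsymbol{\mu}\mathbf{t}=\mathbf{g}\cdot\mathbf{t}$ with $\mathbf{g}\cdot\mathbf{t}\neq0$ (true since $\mathbf{g}\not\parallel\mathbf{f}$) forces $\lambda=1$, yielding $\boldsymbol{\mu}\mathbf{t}=\mathbf{t}$, $\mathbf{t}^{\perp}\parallel\mathbf{f}$ — case (i). If $\mathbf{g}\cdot\mathbf{t}=0$, symmetrically $\boldsymbol{\mu}\mathbf{t}=\lambda\mathbf{t}$ and the $\mathbf{f}$-identity $\lambda(\mathbf{f}\cdot\mathbf{t})=-\mathbf{f}\cdot\mathbf{t}$ with $\mathbf{f}\cdot\mathbf{t}\neq0$ forces $\lambda=-1$, yielding $\boldsymbol{\mu}\mathbf{t}=-\mathbf{t}$, $\mathbf{t}^{\perp}\parallel\mathbf{g}$ — case (ii). If both dot products are nonzero, the two identities read $\mathbf{f}\parallel\big((\boldsymbol{\mu}+\mathbf{I})\mathbf{t}\big)^{\perp}$ and $\mathbf{g}\parallel\big((\boldsymbol{\mu}-\mathbf{I})\mathbf{t}\big)^{\perp}$, and $\mathbf{t}$ cannot be an eigenvector of $\boldsymbol{\mu}$: eigenvalue $-1$ would give $\mathbf{g}\cdot\mathbf{t}=0$ via the $\mathbf{g}$-identity, eigenvalue $+1$ would give $\mathbf{f}\cdot\mathbf{t}=0$ via the $\mathbf{f}$-identity, and a real $\lambda\neq\pm1$ would give $(\boldsymbol{\mu}+\mathbf{I})\mathbf{t}\parallel\mathbf{t}$ and hence $\mathbf{f}\cdot\mathbf{t}=0$ — all contradictions — which is case (iii).

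For the converse one checks directly that each of (i), (ii), (iii) implies the two scalar identities (hence the compatibility equation) and the matching subcase inequality; in (i) and (ii) the nonvanishing of $\mathbf{t}\cdot\mathbf{g}$, resp.\ $\mathbf{t}\cdot\mathbf{f}$, again follows from (\ref{nonde}) since $\mathbf{t}^{\perp}$ is parallel to $\mathbf{f}$, resp.\ $\mathbf{g}$; and for (iii), ``$\mathbf{t}$ not an eigenvector'' together with the two parallelisms forces $\mathbf{f}\cdot\mathbf{t}\neq0$ and $\mathbf{g}\cdot\mathbf{t}\neq0$, since otherwise $\mathbf{t}^{\perp}$ would be parallel to both $\mathbf{f}$ and $(\boldsymbol{\mu}\pm\mathbf{I})\mathbf{t}$, making $(\boldsymbol{\mu}\pm\mathbf{I})\mathbf{t}$ proportional to $\mathbf{t}$ and $\mathbf{t}$ an eigenvector. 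I expect the main obstacle to be bookkeeping rather than depth: keeping the transposes and the asymmetric sign ($\mathbf{f}_b=-\boldsymbol{\mu}^T\mathbf{f}$ but $\mathbf{g}_b=+\boldsymbol{\mu}^T\mathbf{g}$) straight when translating the subcase inequalities of Theorem~\ref{VHHThrm}, and establishing cleanly that ``$\mathbf{t}$ is not an eigenvector of $\boldsymbol{\mu}$'' is equivalent to ``$\mathbf{f}\cdot\mathbf{t}\neq0$ and $\mathbf{g}\cdot\mathbf{t}\neq0$'' in the presence of the parallelism constraints. One should also note that the distinctness clause $\boldsymbol{\mu}^T(-\mathbf{f},\mathbf{g})\neq(\mathbf{f},\mathbf{g})$ is genuinely independent of (i)--(iii): for instance case (i) is perfectly compatible with $\boldsymbol{\mu}^T(-\mathbf{f},\mathbf{g})=(\mathbf{f},\mathbf{g})$, which is precisely the excluded degenerate (non-twin) situation, so distinctness must appear as a separate hypothesis exactly as stated.
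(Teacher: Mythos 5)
Your proposal is correct and follows essentially the same route as the paper's own proof: both reduce the matrix equation $((-\mathbf{f},\mathbf{g})^T\boldsymbol{\mu}-(\mathbf{f},\mathbf{g})^T)\mathbf{t}=0$ to the pair of scalar identities $\mathbf{f}\cdot(\boldsymbol{\mu}+\mathbf{I})\mathbf{t}=0$, $\mathbf{g}\cdot(\boldsymbol{\mu}-\mathbf{I})\mathbf{t}=0$, then case-split on the vanishing of $\mathbf{f}\cdot\mathbf{t}$ and $\mathbf{g}\cdot\mathbf{t}$, using the nondegeneracy condition (\ref{nonde}) to force $\boldsymbol{\mu}\mathbf{t}=\pm\mathbf{t}$ in cases (i)--(ii) and the non-eigenvector condition in case (iii). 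Your write-up is more detailed than the paper's (which is quite terse) but contains no new ideas or deviations.
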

We postpone the proof to the end of this section and instead classify the solutions given by Lemma \ref{TwinningLemma}.    
We assume $0 \leq (\mathbf{t} \cdot \mathbf{e}_2) \leq 1$ without loss of generality. Where $\bff$ or $\bfg$ are not assigned they are 
subject only to the hypotheses of Lemma \ref{TwinningLemma}. There are several cases:
\begin{enumerate}
\item[(i).] (Vertical Twins) 
\begin{itemize}
\item  $\mathbf{t} = \left(\begin{array}{c} 1 \\ 0 \end{array}\right), \quad \boldsymbol{\mu} \in \Big\{ \left(\begin{array}{cc} 1 & 1 \\ 0 & - 1 \end{array}\right),   \left(\begin{array}{cc} 1 & -1 \\ 0 & -1 \end{array}\right),  \left(\begin{array}{cc} 1 & 0 \\ 0 & -1 \end{array}\right) \Big\}, \quad \mathbf{f} \parallel \left(\begin{array}{c}  0 \\ 1 \end{array} \right);$
\item  $\mathbf{t} = \left(\begin{array}{c} 0 \\ 1 \end{array}\right), \quad \boldsymbol{\mu} \in \Big\{ \left(\begin{array}{cc} -1 & 0 \\ 1 & 1 \end{array}\right),   \left(\begin{array}{cc} -1 & 0 \\ -1 & 1 \end{array}\right),  \left(\begin{array}{cc} -1 & 0 \\ 0 & 1 \end{array}\right) \Big\}, \quad \mathbf{f} \parallel \left(\begin{array}{c}  1 \\ 0 \end{array} \right);$
\item  $\mathbf{t} = \frac{1}{\sqrt{2}}\left(\begin{array}{c} \pm1 \\ 1 \end{array}\right), \quad \boldsymbol{\mu} = \left(\begin{array}{cc} 0 & \pm1  \\ \pm1 & 0 \end{array}\right), \quad \mathbf{f} \parallel \left(\begin{array}{c}  1 \\ \mp 1 \end{array} \right), \quad \text{respectively};$
\item  $\mathbf{t} = \frac{1}{\sqrt{5}}\left(\begin{array}{c} \pm 2  \\ 1 \end{array}\right), \quad \boldsymbol{\mu} = \left(\begin{array}{cc} 1 & 0  \\ \pm1 & -1 \end{array}\right), \quad \mathbf{f} \parallel \left(\begin{array}{c}  \mp1 \\ 2 \end{array} \right), \quad \text{respectively};$
\item  $\mathbf{t} = \frac{1}{\sqrt{5}}\left(\begin{array}{c} \pm1  \\  2 \end{array}\right), \quad \boldsymbol{\mu} = \left(\begin{array}{cc} -1 & \pm1  \\ 0 & 1 \end{array}\right), \quad \mathbf{f} \parallel \left(\begin{array}{c}  2 \\ \mp1 \end{array} \right), \quad \text{respectively}.$
\end{itemize}
\item[(ii).] (Horizontal Twins)  For each of the above, replace $\boldsymbol{\mu}$ by $-\boldsymbol{\mu}$ and switch $\mathbf{f}$ and $\mathbf{g}$.
\item[(iii).] (Helical Twins)  Assume $m,n \in \mathbb{Z}$ with $n \geq 0$, $m^2 + n^2 \neq 0$. 
\begin{itemize}
\item $\mathbf{t} = \frac{1}{\sqrt{n^2+m^2}} \left(\begin{array}{c} m \\ n \end{array}\right), \quad \boldsymbol{\mu} = \left(\begin{array}{cc} 1 & 1 \\1 & 0  \end{array}\right), \quad \mathbf{f} \parallel \left(\begin{array}{c} - m - n \\ 2m+n\end{array}\right), \quad  \mathbf{g} \parallel \left(\begin{array}{c} n-m \\ n\end{array}\right);$ 
\item $\mathbf{t} = \frac{1}{\sqrt{n^2+m^2}} \left(\begin{array}{c} m \\ n \end{array}\right), \quad \boldsymbol{\mu} = \left(\begin{array}{cc} 1 & -1 \\-1 & 0  \end{array}\right), \quad \mathbf{f} \parallel \left(\begin{array}{c} m - n \\ 2m-n\end{array}\right), \quad  \mathbf{g} \parallel \left(\begin{array}{c} m+n \\ -n\end{array}\right);$ 
\item $\mathbf{t} = \frac{1}{\sqrt{n^2+m^2}} \left(\begin{array}{c} m \\ n \end{array}\right), \quad \boldsymbol{\mu} = \left(\begin{array}{cc} 0 & 1 \\1 & 1  \end{array}\right), \quad \mathbf{f} \parallel \left(\begin{array}{c} -m - 2n \\ m+n\end{array}\right), \quad  \mathbf{g} \parallel \left(\begin{array}{c} -m \\ n-m\end{array}\right);$ 
\item $\mathbf{t} = \frac{1}{\sqrt{n^2+m^2}} \left(\begin{array}{c} m \\ n \end{array}\right), \quad \boldsymbol{\mu} = \left(\begin{array}{cc} 0 & -1 \\-1 & 1  \end{array}\right), \quad \mathbf{f} \parallel \left(\begin{array}{c} m - 2n \\ m-n\end{array}\right), \quad  \mathbf{g} \parallel \left(\begin{array}{c} -m \\ n+m\end{array}\right);$ 
\item  for each of the above, replace $\boldsymbol{\mu}$ by $-\boldsymbol{\mu}$ and switch $\mathbf{f}$ and $\mathbf{g}$.
\end{itemize}
\end{enumerate}

\begin{figure*}[ht!]
	\centering
	\includegraphics[width=0.5\textwidth]{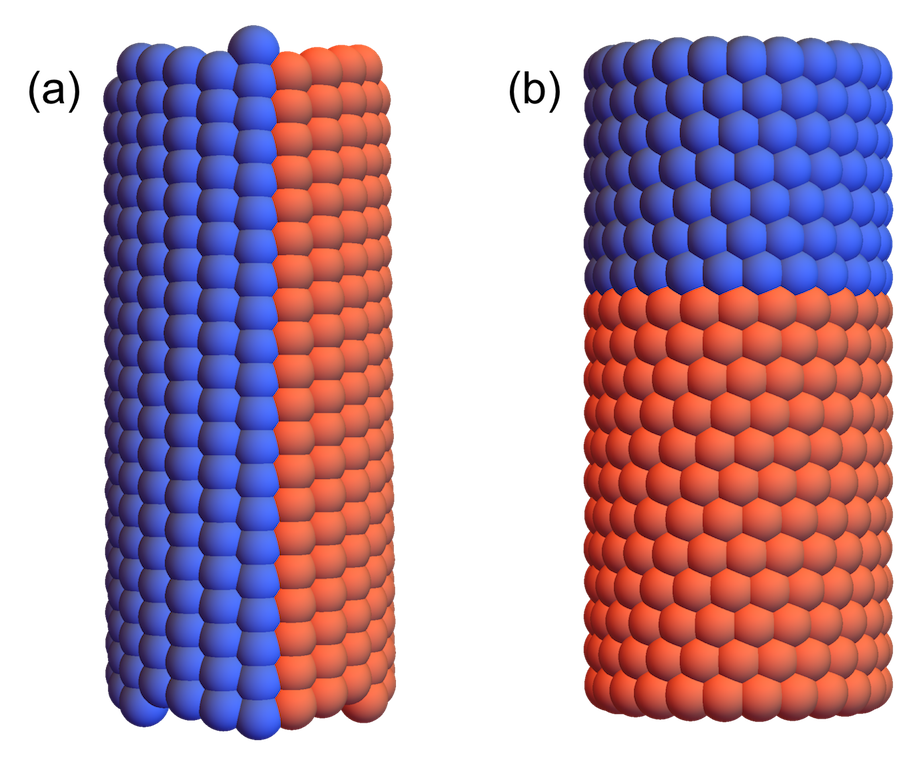}
	\caption{Examples of vertical and horizontal twins.  (a). A vertical twin with parameters  $\mathbf{t} = \mathbf{e}_1$, $\boldsymbol{\mu}= (\mathbf{e}_1, \mathbf{e}_1 -\mathbf{e}_2),$ $\mathbf{f} = (\pi/12) \mathbf{e}_2$, $\mathbf{g} = 0.0625 (4,3)$.  (b). A horizontal twin with parameters $\mathbf{t} = \mathbf{e}_1$, $\boldsymbol{\mu}= (-\mathbf{e}_1, -\mathbf{e}_1 +\mathbf{e}_2),$ $ \mathbf{f} =  (\pi/12)(1,.6)$, $\mathbf{g} = (\pi/12)\mathbf{e}_2$.}
	\label{fig:vertical_horizontal_twin}
\end{figure*}

\begin{figure*}[ht!]
	\centering
	\includegraphics[width=0.8\textwidth]{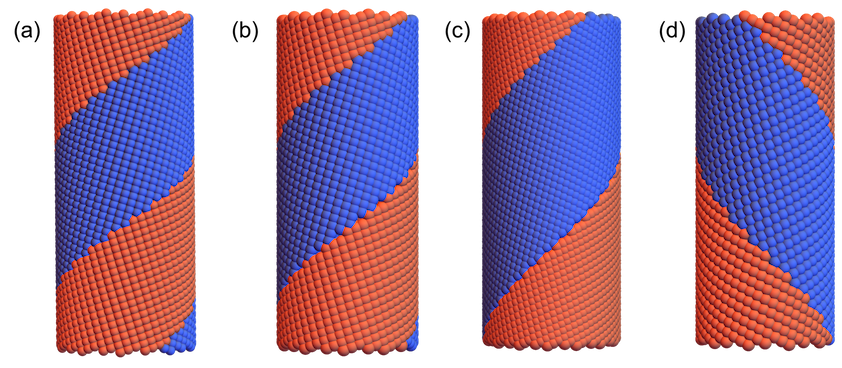}
	\caption{Examples of helical twins. The parameter are: (a). $\mathbf{t} = (1/\sqrt{13})(3,-2)$, $\boldsymbol{\mu} = (\mathbf{e}_1 + \mathbf{e}_2, \mathbf{e}_1)$, $\mathbf{f} = (\pi/101)(-1,4)$, $\mathbf{g} = -0.02(5,2)$;  (b). $\mathbf{t} = (1/\sqrt{13})(3,2)$, $\boldsymbol{\mu} = (\mathbf{e}_1 - \mathbf{e}_2, -\mathbf{e}_1)$, $\mathbf{f} = (\pi/101)(1,4)$, $\mathbf{g} = 0.0225(5,-2)$; (c). $\mathbf{t} = (1/\sqrt{13})(2,-3)$, $\boldsymbol{\mu} = (\mathbf{e}_2,\mathbf{e}_1+  \mathbf{e}_2)$, $\mathbf{f} = (\pi/154)(4,-1)$, $\mathbf{g} = -0.02(2,5)$;  (d). $\mathbf{t} = (1/\sqrt{13})(2,3)$, $\boldsymbol{\mu} = (-\mathbf{e}_2, -\mathbf{e}_1+ \mathbf{e}_2)$, $\mathbf{f} = (\pi/101)(-4,-1)$, $\mathbf{g} = 0.03(-2,5)$.}
	\label{fig:helical_twin}
\end{figure*}

The \textit{twin vector} $\mathbf{t}_{\mathbf{y}}$ on the helical structure is defined analogously to the slip vector in (\ref{eq:slipVector}).  The twin vectors for vertical and horizontal twins are $\mathbf{e}$ and $\mathbf{W}\mathbf{r}$, respectively, but the twin vectors for helical twins can take many forms.   For instance, those corresponding to the examples in Figure  \ref{fig:helical_twin} are given by (in the $\{ |\mathbf{r}|^{-1}\mathbf{r} ,  |\mathbf{r}|^{-1}\mathbf{W}\mathbf{r}, \mathbf{e}\}$ basis):

\begin{table}[!ht]
	\centering
	\begin{threeparttable}
		\begin{tabular}{ccccc}
			\toprule
			Twin Vector &	Helical (a)		  & Helical (b)			    &   Helical (c) & 	Helical (d)	  	 \\
			\midrule 	
			$\mathbf{t}_{\mathbf{y}}$ &  $\left(\begin{array}{c} 0 \\  -0.819123\\  -0.573617\end{array}\right)$ & $\left(\begin{array}{c}0 \\  0.785515 \\ 0.618842\end{array}\right)$ & $\left(\begin{array}{c} 0\\ 0.714072\\ 0.700072\end{array}\right)$ & $\left(\begin{array}{c} 0\\ -0.68951\\ 0.724277\end{array}\right)$  \\
			\bottomrule
		\end{tabular}
	\end{threeparttable}
\end{table}

\begin{proof}[Proof of Lemma \ref{TwinningLemma}.] (i). Necessary and sufficient conditions are  $\mathbf{g} \cdot (\boldsymbol{\mu} \mathbf{t} - \mathbf{t}) = 0$, $\mathbf{g} \cdot \mathbf{t} \neq 0$ and  $\boldsymbol{\mu} \mathbf{t} \cdot \mathbf{f} = -\mathbf{f} \cdot \mathbf{t} = 0$ for some $\boldsymbol{\mu} \in \mathcal{N}^{(-)}(GL(\mathbb{Z}^2))$.  The latter conditions imply $\mathbf{f} \parallel \mathbf{t}^{\perp}$ and $\boldsymbol{\mu} \mathbf{t} = \lambda \mathbf{t}$ for some $\lambda \in \mathbb{R}$.  The two former conditions then imply $\lambda =  1$.  This is necessary and sufficient so long as $\boldsymbol{\mu} \in \mathcal{N}^{(-)}(GL(\mathbb{Z}^2))$.  (ii). This follows by the argument above after reversing the roles of $\mathbf{f}$ and $\mathbf{g}$ and replacing $\boldsymbol{\mu}$ with its minus. (iii). Necessary and sufficient conditions are $\bff \cdot (\boldsymbol{\mu} + \bfI)\bft  =0$, $\bfg \cdot (\boldsymbol{\mu} - \bfI)\bft =0$, $\bff \cdot \bft \ne 0,\  \bfg \cdot \bft \ne 0$ and $\boldsymbol{\mu} \in  \mathcal{N}^{(-)}(GL(\mathbb{Z}^2))$.  There are linearly
independent solutions $\{\bff, \bfg\}$ of these two equations if and only
if $ (\boldsymbol{\mu} + \bfI)\bft$ and $ (\boldsymbol{\mu} - \bfI)\bft$ are not parallel.
Note also that the conditions $\bff \cdot \bft \ne 0,\  \bfg \cdot \bft \ne 0$ imply that $\boldsymbol{\mu} \bft \ne \pm \bft$.  Hence, $ (\boldsymbol{\mu} + \bfI)\bft$ and $ (\boldsymbol{\mu} - \bfI)\bft $ are not parallel if and only if $\bft$ is not an eigenvector of $\boldsymbol{\mu} \in  \mathcal{N}^{(-)}(GL(\mathbb{Z}^2))$. 
\end{proof}

\

\section{Discussion}\label{sec9}
We have developed a theoretical framework for investigating local compatibility between any two helical phases. This involves the description of the original helical group (Section \ref{sec2}), the nearest neighbor reparameterization of the group (Section \ref{sec3} and \ref{sec4}), and the continuous compatibility conditions (Section \ref{sec5}-\ref{sec7}).  Through rigorous justification, we have shown that there are four and only four types of locally compatible interface of a helical structure.  These are vertical, horizontal, helical and elliptical interfaces\textemdash each named for their physical appearance on the structure.  We have specialized these results to the case in which the two phases are the same and we have noticed that additional interfaces are possible when we allow near (as opposed to nearest) neighbor generators.  We then  classified the structural parameters under which near-neighbor-generated interfaces can form in a single phase; these are naturally interpreted as slips and twins (Section \ref{sec8}).  In this section we discuss qualitatively some of the more striking features of helical structures that can be explored with this theoretical framework.

\vspace{2mm}

\noindent \textbf{Mechanical twinning for large and reversible twist.}  If a helical structure has chirality or handedness, then twinning provides a promising mechanism to induce large macroscopic deformation at small elastic stress.  Consider the example in Figure \ref{fig:twist_horizontal}.  The blue phase atoms are arranged so that their is a line of atoms perfectly horizontal on the circumference of the cylinder and another line that loops around the cylinder in a ``right-handed" fashion.  When subject macroscopic twist, the structure can accommodate the twist by deforming uniformly away from its preferred chirality at the cost of significant elastic stresses.  But it has an alternative.  Since its structural parameters satisfy the conditions of compatibility to form a horizontal twin, this structure can twist by the motion of twinned interfaces. Notice in the figure that the introduction of a twinned interface in this structure creates a mirrored (i.e., ``left-handed") red phase, and the motion of this interface results  in a change in volume fraction of the right and left handed phases.  This corresponds to a macroscopic twist.  More than that, if the initial phase is a stress-free equilibrium, then the mirrored phase and the mixtures should also be nearly stress-free (except, perhaps, close to the interface). So, we achieve large twisting deformation at little stress in a process that is (ideally) completely reversible.

\begin{figure}[ht!]
\centering
\includegraphics[width = 6in]{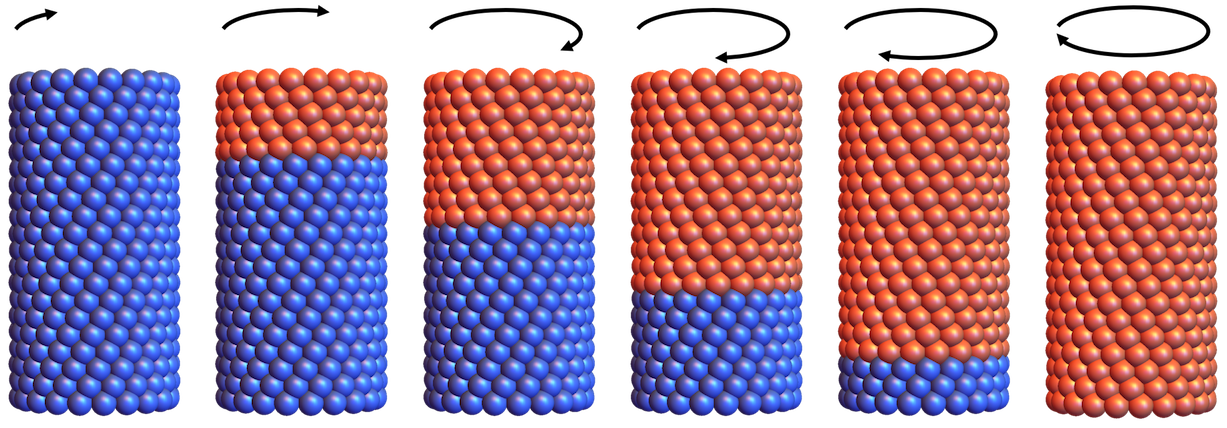}
\caption{Macroscopic twist induced by mechanical twinning at a horizontal interface. The parameters are $\mathbf{t} = \mathbf{e}_2$, $\boldsymbol{\mu}= (\mathbf{e}_1-\mathbf{e}_2, -\mathbf{e}_2),$ $ \mathbf{f} =  (0.12,\pi/10)$, $\mathbf{g} = (-0.264,0)$.}
\label{fig:twist_horizontal}
\end{figure}

\begin{figure}[ht!]
\centering
\includegraphics[width = 6in]{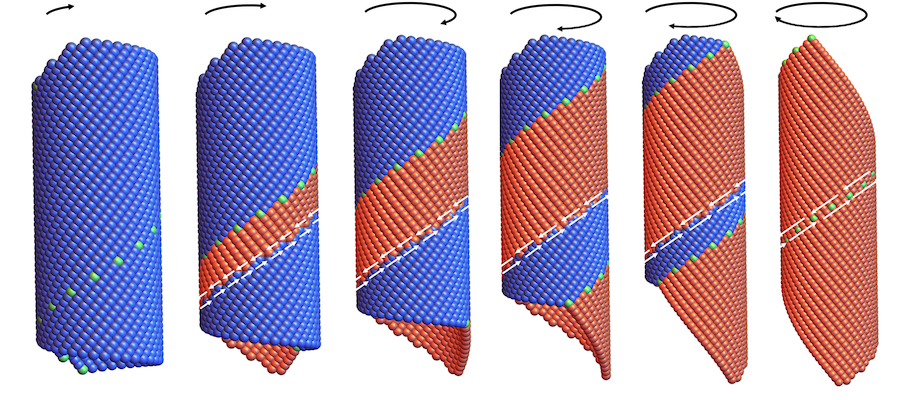}
\caption{Macroscopic twist and extension induced by twinning and slip at a helical interface. The parameters are $\mathbf{t} = (2,3)$, $\boldsymbol{\mu} = (\mathbf{e}_1 + \mathbf{e}_2, \mathbf{e}_1)$, $\mathbf{f} = (\pi/88)(-5,7)$ and $\mathbf{g} = 0.0225(1,3)$.  Points on the locally compatible interface are displayed in green.}
\label{fig:slip_helical}
\end{figure}

\vspace{2mm}

\noindent \textbf{Helical twins as the result of a phase transformation.} The structural parameters that enable a horizontal twin are, unfortunately, quite restrictive.  A horizontal line of atoms along the circumference is required, and this is far from generic\footnote{although it may be induced by a particular macroscopic twist and extension in a structure that does not exhibit this feature in the stress-free state.}.  On the other hand, many choices of parameters  allow for locally compatible helical twins. It is tempting to think a similar correspondence between mechanical twinning and macroscopic deformation applies here, but helical interfaces are subject to a delicate notion of global compatibility.  Consider the helical twin in Figure \ref{fig:slip_helical}.  
\begin{figure}[!ht]
	\centering
	\includegraphics[width = 6in]{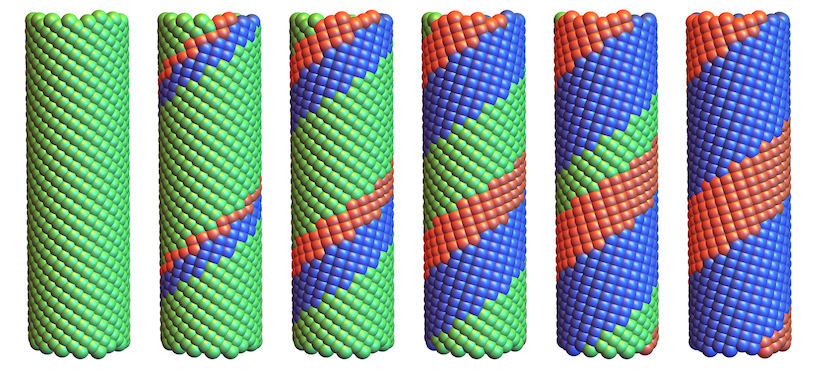}
	\caption{A phase transition results in a helical interface.  The reference tangent $\mathbf{t} = (3,-2)$. The helical twin has parameters $\mathbf{f} = (\pi/55)(-1,4)$, $\mathbf{g} = 0.04(-5,-2)$ and $\boldsymbol{\mu} = (\mathbf{e}_1 + \mathbf{e}_2, \mathbf{e}_1)$.  The parent phase has parameters $(\bar{\mathbf{f}}, \bar{\mathbf{g}}) = \lambda (\mathbf{f}, \mathbf{g}) + (1-\lambda) ( -\boldsymbol{\mu}^T \mathbf{f}, \boldsymbol{\mu}^T \mathbf{g})$ for $\lambda = .6$.  }
	\label{fig:HelicalPhaseChange}
\end{figure}
Here, we resolve the local compatibility condition at the interface associated with the green atoms, so that neighboring atoms prior to transformation remain neighbors across this interface after transformation.  However, helical interfaces come in pairs\textemdash the blue phase is above the red phase for the locally compatible interface, but there is a second interface with the opposite orientation.   The correspondence of atoms across this second interface may look perfect, but any motion involving a change in the volume fraction of the phases, such as the one shown in the figure, {\it necessarily results in slip}.  Thus, mechanical twist in these instances is achieved only through a combination of twinning and slip.  Consequently, the volume fraction of phases for these interfaces is, in a certain sense, topologically protected.

This leads to a fundamental question: Can any of the helical interfaces be achieved without slip?  The answer is yes, and potentially generically so (though, there is still much to discover in this direction).  The setting is that of a phase transformation as shown in Figure \ref{fig:HelicalPhaseChange}. A helical structure, initially of a certain preferred chirality (green), is subject to a stimuli which changes its free energy so that a second chirality (blue and its mirror in red) becomes the preferred state.  The two phases may satisfy the conditions of local compatibility along a helical interface, but they will be unable to coexist as pairs without slip due to the global compatibility condition discussed above.  However, by twinning the second phase at exactly the right volume fraction\textemdash so that it accommodates the chirality of the first phase\textemdash we can achieve a globally compatible helical structure that involves no slip and no elastic stresses, all while cycling back and forth between phases.  This is a fascinating analog of ideas of self-accommodation and  $\lambda_2 = 1$ for crystalline solids undergoing phase transformations   \cite{bhattacharya_self-accommodation_1992, chen_study_2013,gu_17}.

\vspace{2mm} 

\noindent \textbf{Vertical interfaces in microtubules.}  As a final comment to
\begin{figure}[!ht]
	\centering
	\includegraphics[width = 6.5in]{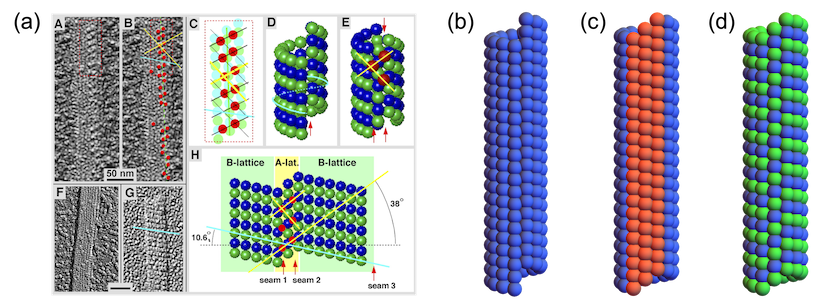}
	\caption{Vertical interfaces in a microtubule. (a).  The microtubule lattice seam and compatible vertical interfaces (Reproduced with permission \cite{SANDBLAD20061415} \textcopyright 2006 Elsevier Inc.).  (b-d) A 13 protofilaments microtubule constructed by our method.  (b) The lattice seam created by wrapping the ``B" lattice to form a tube.  (c-d). The microtubule as the combination of ``A" and ``B" phases.  These form compatible vertical interfaces.  The parameters are:  (Blue) $\bff_a=(2\pi/13, 0)$, $\bfg_a=(\frac{2\pi}{13}\tan(\frac{-10.6\pi}{180}), 2 \pi/13)$. (Red) $\bff_b=(2\pi/13, 0)$, $\bfg_b=(\frac{2\pi}{13}\tan(\frac{38\pi}{180}), 2\pi/13)$ and $\boldsymbol{\mu} = \mathbf{I}$. The reference tangent is $\bft=(0,1)$.   }
	\label{fig:mirotubule}
\end{figure}
emphasize the importance of compatible interfaces in helical structures, we introduce a biological example: the fission yeast end binding protein (EB1) homolog Mal3p in microtubules \cite{SANDBLAD20061415}.  Microtubules are dynamic tubular structures involved in intracellular transport, flagellar motion, and many other tasks in cells.  Their structure is often that of a non-discrete helical group.  Specifically, in the example shown in Figure \ref{fig:mirotubule}(a), the preferred ``B" lattice is described by the angle $10.6^{\text{o}}$ when unrolled.  However, wrapping this ``B'' lattice onto a tubular structure leads to a large vertical seam on the tube \ref{fig:mirotubule}(b).   \textit{This is non-discreteness.}  The authors in \cite{SANDBLAD20061415} argue that this seam, in particular, is much too large for the microtubule to be stable in this structure.  Instead, the microtubule forms a vertical strip corresponding to the ``A" lattice depicted in \ref{fig:mirotubule}(a).  This has the effect of stabilizing the structure by making the seam more coherent.   In the context of our theoretical framework, this mechanism is exactly that of compatible vertical interfaces between  the two phases (Figure \ref{fig:mirotubule}(c-d)).

Besides the microtubule, there are many possible applications of these results above to nanotubes, inorganic or biological.  In this paper we have concentrated on the basic theory, especially the classification of {\it all possible} compatible interfaces and their mobility.  In forthcoming work
we will apply this theory to interesting special cases, especially to guide the design of structure-dependent tension-twist protocols that can induce 
specific phase transformations.

\vspace{5mm}

\noindent {\bf Acknowledgment.} This work was supported by the MURI program (FA9550-18-1-0095, FA9550-16-1-0566) and ONR (N00014-18-1-2766).  It also benefitted from the support of NSF (DMREF-1629026),  the Medtronic Corp, the Institute on the Environment (RDF fund), and the Norwegian Centennial Chair Program.
\bibliographystyle{plain}
\bibliography{transformations_reference}

\end{document}